\documentclass{article}

\PassOptionsToPackage{numbers, sort}{natbib}

\usepackage[final]{neurips_2022}

\usepackage[utf8]{inputenc} %
\usepackage[T1]{fontenc}    %
\usepackage{hyperref}       %
\usepackage{url}            %
\usepackage{booktabs}       %
\usepackage{amsfonts}       %
\usepackage{nicefrac}       %
\usepackage{microtype}      %
\usepackage{xcolor}         %
\usepackage{subcaption}
\usepackage{enumitem}

\usepackage{algorithm}
\usepackage{algcompatible}

\usepackage{amsthm}
\usepackage{shortcutsg}
\usepackage[capitalise]{cleveref}
\usepackage{etoolbox}
\AtBeginEnvironment{APPENDICES}{\crefalias{section}{appendix}}
\AtBeginEnvironment{APPENDICES}{\crefalias{subsection}{appendix}}

\theoremstyle{plain}
\newtheorem{theorem}{Theorem}
\newtheorem{corollary}{Corollary}
\newtheorem{lemma}{Lemma}

\theoremstyle{definition}

\newtheorem{definition}{Definition}

\newtheorem{remark}{Remark}

\newcommand{\Yobs}{Y}
\newcommand{\Ypot}{Y^*}
\newcommand{\Yt}{\Ypot(1)}
\newcommand{\Yc}{\Ypot(0)}
\newcommand{\fna}{{\mathrm{FNA}}}
\newcommand{\ate}{{\mathrm{ATE}}}
\newcommand{\ahe}{{\mathrm{AHE}}}
\newcommand{\ite}{{\mathrm{ITE}}}
\newcommand{\catef}{\tau_-}
\newcommand{\cate}{\catef(X)}
\newcommand{\hcatef}{\hat\tau_-}
\newcommand{\catsf}{\tau_+}
\newcommand{\cats}{\catsf(X)}
\newcommand{\hcatsf}{\hat\tau_+}

\newcommand{\Ptr}{\PP^*}
\newcommand{\Pobs}{\PP}
\newcommand{\Etr}{\EE^*}
\newcommand{\Eobs}{\EE}

\usepackage[compact]{titlesec}  

\title{What's the Harm? Sharp Bounds on the Fraction Negatively Affected by Treatment}

\author{%
  Nathan Kallus\\
  Netflix and Cornell University\\
  \texttt{kallus@cornell.edu}
}

\renewcommand{\paragraph}[1]{\textbf{#1}\quad}

\Crefname{equation}{Eq.}{Eqs.}
\Crefname{figure}{Fig.}{Figs.}
\Crefname{tabular}{Tab.}{Tabs.}
\Crefname{section}{Sec.}{Secs.}
\Crefname{theorem}{Thm.}{Thms.}
\Crefname{algorithm}{Alg.}{Algs.}

\begin{document}

\maketitle

\begin{abstract}
The fundamental problem of causal inference -- that we never observe counterfactuals -- prevents us from identifying how many might be negatively affected by a proposed intervention. If, in an A/B test, half of users click (or buy, or watch, or renew, \etc), whether exposed to the standard experience A or a new one B, hypothetically it could be because the change affects no one,  because the change positively affects half the user population to go from no-click to click while negatively affecting the other half, or something in between. While unknowable, this impact is clearly of material importance to the decision to implement a change or not, whether due to fairness, long-term, systemic, or operational considerations. We therefore derive the \emph{tightest-possible} (\ie, sharp) bounds on the fraction negatively affected (and other related estimands) given data with only factual observations, whether experimental or observational. Naturally, the more we can stratify individuals by observable covariates, the tighter the sharp bounds. Since these bounds involve unknown functions that must be learned from data, we develop a robust inference algorithm that is efficient almost regardless of how and how fast these functions are learned, remains consistent when some are mislearned, and still gives valid conservative bounds when most are mislearned. Our methodology altogether therefore strongly supports credible conclusions: it avoids spuriously point-identifying this unknowable impact, focusing on the best bounds instead, and it permits exceedingly robust inference on these. We demonstrate our method in simulation studies and in a case study of career counseling for the unemployed.
\end{abstract}

\section{Introduction}

Before making changes to an online platform, product managers regularly conduct experiments (``A/B tests'') to make sure a change does not negatively impact users' experience. Similarly, extensive program evaluations, whether experimental or observational, are a corner stone of evidence-based policymaking and help avoid harmful policy changes. These tests and evaluations focus on assessing the causal effect of a change on key metrics, whether user engagement or retention on online platforms or outcomes like employment for social programs. Average treatment effects, heterogeneous treatment effects, and distributional treatment effects on these outcomes quantify the quality of the intervention in terms of aggregate impacts on the population or on subpopulations. This is in its essence a statistical solution to the \emph{fundamental problem of causal inference}: assess aggregate effects since the impact on any one individual can never be observed. We can never know whether a user that churned under arm ``B'' would have been retained under arm ``A,'' but we can characterize the distribution of retention under each arm, even segmenting the population by observable features.

While unknowable, the impact of changes on individuals is clearly of material importance to the decision to implement a change or not. 
This individual impact can have downstream effects, whether user-behavioral, reputational, or operational.
Moreover, quantifying how negative this impact can be is crucial to understanding fairness with respect to the welfare of the individuals, rather than just focusing on social (\ie, average) welfare, the welfare of the decision-making entity, or the welfare of observable groups.
Specifically, even if the average treatment effect is zero, there can still be a sizable subpopulation that is negatively affected by the change.
That is, a change with zero average effect cannot in earnest be called ``harmless." 
In this paper we consider measuring the fraction that are negatively affected. 
The aim is to be able to flag changes that may individually harm many. 
Unfortunately this fraction cannot be identified from data, even experimental data, but we may still be able to bound it.
For example, if, in an A/B test, half of users click (or buy, or watch, or renew, etc.) whether exposed to the standard experience ``A" or a new one ``B," hypothetically it could be because the change affects no one, because the change positively affects half the user population to go from no-click to click while negatively affecting the other half, or anything in between. Such a large bound on the range of possibilities, however, is very crude and uninformative.

In this paper, we derive the \emph{sharp} bounds on the fraction of individuals negatively affected by treatment using all baseline features available, and we develop tools to conduct robust inferences on these bounds. That is, we characterize the \emph{tightest-possible} interval that contains all values of this unknowable quantity consistent with all observable information, in particular crucially leveraging unit feature information. We consider both the fraction negatively affected under a wholesale change from one treatment regime to another as well as under an optimal policy that assigns each observable subpopulation the treatment with best average outcomes for the subpopulation. Then, we tackle how to actually assess these theoretical bounds based on actual data, namely, how to estimate the interval endpoints and construct confidence intervals (CIs) that account for sampling error, on top of the inherent epistemic uncertainty characterized by the bounds. The bounds involve potentially complex functions such as the conditional average treatment effect (CATE) function, so we develop inference methodology that is exceedingly robust to learning these functions. Namely, our inferences are valid and calibrated even when these complex functions are learned nonparametrically at slow rates, and our estimated bounds remain valid, albeit conservative, even when these are learned inconsistently, so that conclusions as to the presence of harm based on our tools can be highly credible. We demonstrate our tools 
in a simulation study and in a case study of French reemployment assistance programs.

\section{Problem Set Up and the Fraction Negatively Affected}

\paragraph{The Population}
Each individual in the population is associated with baseline \textbf{covariates} (observable characteristics), $X\in\Xcal$, and two \emph{binary} \textbf{potential outcomes}, $\Yc,\,\Yt\in\{0,1\}$, corresponding to each of two treatment options, $0$ and $1$.
Examples include: 
two versions of product on an online platform and whether user is retained in the next quarter,
or two versions of a job training program for the unemployed and whether the participant is reemployed within the following year.
We will consider the generalization to non-binary outcomes in \cref{remark:nonbin,thm:general}.

The \textbf{individual treatment effect} is the difference in potential outcomes: $\ite=\Yt-\Yc.$\break
We assume that an outcome value of $1$ corresponds to a better outcome (\eg, retained, reemployed) and $0$ to a worse outcome (\eg, churned, unemployed).
Thus, if treatment 0 represents status quo and treatment 1 represents a proposed change, then individuals with $\ite=0$ ($\Yc=\Yt=0$ or $\Yc=\Yt=1$) are unaffected by the change, individuals with $\ite=1$ ($\Yc=0,\,\Yt=1$) benefit from the change, and individuals with $\ite=-1$ ($\Yc=1,\,\Yt=0$) are harmed by the change. This list of potential-outcome combinations is exhaustive.

\paragraph{The Data}
We consider data from either a randomized experiment or an observational study, wherein we never observe both $\Yc$ and $\Yt$ simultaneously. Each individual is associated with a \textbf{treatment} $A\in\{0,1\}$, and we observe the \textbf{factual outcome} $\Yobs=\Ypot(A)$ corresponding to $A$. We never observe the counterfactual $\Ypot(1-A)$. That is, the data consists of observations of $(X,A,\Yobs)$. 

In particular, the distribution of the data only reflects part of the full population distribution involving both potential outcomes. Let $\Ptr$ denote the distribution of $(X,A,\Yc,\Yt)$.
Define the coarsening function $\Ccal:(x,a,y_0,y_1)\mapsto(x,a,ay_1+(1-a)y_0)$, which has the pre-image $\Ccal^{-1}(x,a,y)=\{(x,a,y_0,y_1):ay_0+(1-a)y_1=y\}$. Then the distribution induced on $(X,A,\Yobs)$ is given by $\Pobs=\Ptr\circ\Ccal^{-1}$ (as measures).
That is, the data distribution $\Pobs$ is given by taking a draw from $\Ptr$ and coarsening it by $\Ccal$.
The data then consists of $n$ independent draws, $(X_i,A_i,\Yobs_i)\sim\Pobs$ for $1\leq i\leq n$. We use $\Etr,\Eobs$ to denote expectations with respect to $\Ptr,\Pobs$, respectively. Note that both have the same $(X,A)$-distribution, so anything involving only $(X,A)$ will be the same. For any function $f$ of $X,A,Y$ we let $\|f\|_p$ be the $L_p$-norm with respect to $\Pobs$.

We assume throughout that all endogeneity is explained by $X$, known as unconfoundedness: $\Ptr(A=1\mid X)=\Ptr(A=1\mid \Ypot(0), X)=\Ptr(A=1\mid \Ypot(1), X)$. That is, controlling for $X$, treatment assignment is independent of an individual's idiosyncrasies as relevant to their potential outcomes.
Randomized experiments ensure this by design, often with $X\indep A$ (complete randomization). 
Our results extend to observational settings assuming unconfoundedness, which is a common assumption in this setting \citep{imbens2015causal}.
For our purposes, the only technical difference between the experimental and observational settings is whether or not we exactly know the \textbf{propensity score}, defined as
$e(X)=\Pobs\prns{A=1\mid X}.$
We assume throughout that $e(X)\in(0,1)$ almost surely, known as overlap.
Note that assuming $\Yobs=\Ypot(A)$ also encapsulates non-interference \citep{rubin1986comment}.

In the following it will also be useful to define the \textbf{conditional mean of potential outcomes}:
\begin{equation}\label{eq:muid}\mu(X,a)=\Etr[\Ypot(a)\mid X]=\Eobs[\Yobs\mid X,A=a],\quad a=0,1,\end{equation}
where the last equality follows from unconfoundedness and overlap,
showing $\mu$ only depends on $\Pobs$.
We also define the \textbf{conditional average treatment effect (CATE) and sum (CATS) functions} as
\begin{align}\label{eq:cate}
\cate=\Etr[\ite\mid X]=\mu(X,1)-\mu(X,0),\quad
\cats=\mu(X,1)+\mu(X,0).
\end{align}

\subsection{Parameters of Interest}

The primary parameter of interest we consider is the \textbf{fraction negatively affected (FNA)} by a wholesale change from treatment 0 to treatment 1:
$$\fna=\Ptr\prns{\ite<0}=\Ptr\prns{\Yc=1,\,\Yt=0}.$$
In particular, this quantity stands in contrast to the average treatment effect,
$\ate=\Etr[\ite]=\Etr[\Yt]-\Etr[\Yc].$
A zero or positive ATE is generally interpreted to indicate a neutral or favorable treatment. However, having $\ate\geq0$ need not mean having $\fna=0$. A simple example is $(Y(0),Y(1))\sim\operatorname{Unif}(\{0,1\}^2)$, which has $\ate=0$ but $\fna=1/4$.

Given we observe baseline covariates, we can more generally consider personalized treatment policies that assign treatments based on the value of $X$. Let $\pi_0,\pi_1:\Xcal\to\{0,1\}$ denote any two such policies and let us define the fraction harmed by a change from $\pi_0$ to $\pi_1$ as
$$\fna_{\pi_0\to\pi_1}=\Ptr\prns{(\pi_1(X)-\pi_0(X))\cdot\ite<0}=\Ptr\prns{\Ypot(\pi_0(X))=1,\,\Ypot(\pi_1(X))=0}.$$

This more general construct gives rise to several parameters of interest:
\begin{enumerate}[leftmargin=*]
\item First, we recover our primary parameter of interest, $\fna=\fna_{0\to1},$ where $0$ (or, $1$) stands for the constant policy taking value $0$ (or, $1$) everywhere.
\item Second, given a proposed personalized treatment policy $\pi:\Xcal\to\{0,1\}$, the quantity $\fna_{0\to\pi}$ is the fraction negatively affected by changing from a status quo of 0 to deploying the proposed policy $\pi$, which intervenes to treat the subpopulation with covariates $X$ such that $\pi(X)=1$.
\item Third, we can recover the misclassification rate of the policy $\pi$: $$\ts\fna_{1-\pi\to\pi}=\Ptr\prns*{\ts\pi(X)\notin\argmax_{a\in\{0,1\}}\Ypot(a)}.$$ This represents the fraction of individuals negatively affected by being misclassified by $\pi$: they should be treated by one treatment but are being given the other.
\item Of particular interest is the misclassification rate of the \emph{optimal} personalized treatment policy, which assigns the treatment with larger conditional average outcome:
$$\ts\pi^*(X)\in\argmax_{a\in\{0,1\}}\Etr\bracks{\Ypot(a)\mid X}.$$
We are interested in $\fna_{1-\pi^*\to\pi^*}$.
The policy $\pi^*$ can equivalently be characterized as the policy \emph{maximizing} social welfare, $\Etr\bracks{\Ypot(\pi(X))}$, or as \emph{minimizing} misclassification, $\fna_{1-\pi\to\pi}$, both over \emph{all} policies $\pi$.
Given this, one hopes but cannot rule out that $\pi^*$ avoids negative impact. 
\end{enumerate}

\section{Sharp Bounds}\label{sec:sharp}

In this section, we derive sharp bounds on our parameters of interest. For the sake of generality we focus on the parameter $\fna_{\pi_0\to\pi_1}$ as it captures all of our parameters of interest using different $\pi_0,\pi_1$. The sharp bounds describe the set of values that $\fna_{\pi_0\to\pi_1}$ can take if all we know is the distribution of the data, $\Pobs$. This characterizes the most we can hope to learn at the limit of infinite data since this distribution is the most we can learn from draws from it, that is, our data.

To define this formally, recall that $\Pobs=\Ptr\circ\Ccal^{-1}$ is given by $\Ptr$ via coarsening. However, this mapping from $\Ptr$ to $\Pobs$ need not be invertible.
Given a parameter $\psi(\Ptr)$ that depends on the full distribution $\Ptr$, we define 
its \emph{identified set} as the set of all values consistent with $\Pobs$, that is, that could be explained by some $\Ptr$ that gives rise to the given data distribution and satisfies unconfoundedness:
$$
\Scal(\psi;\Pobs)=\braces{\psi(\Ptr)~~:~~
\Ptr\circ\Ccal^{-1}=\Pobs,~\Ptr(A=1\mid X)=\Ptr(A=1\mid X,\Ypot(a)),\,a=0,1}.
$$
We are interested in the identified set of and sharp bounds on the parameter $\fna_{\pi_0\to\pi_1}$.

When $\abs{\Scal(\psi;\Pobs)}=1$, we say that $\psi$ is \emph{identifiable} because it is uniquely specified by the distribution of the data. Otherwise, we say it is \emph{unidentifiable}: even given infinite data (equivalently, the distribution of the data), we cannot determine the value of $\psi(\Ptr)$ based on the data alone since many values are consistent with the observations.
For example, ATE \emph{is} identifiable because $\ate=\Eobs[\mu(X,1)]-\Eobs[\mu(X,0)]$ and \cref{eq:muid} shows that $\mu(X,a)=\Eobs[Y\mid X,A=a]$ depends on $\Pobs$ alone.

Is $\fna_{\pi_0\to\pi_1}$ identifiable?
Our first result characterizes exactly when, showing that generally, \emph{no}.

\begin{theorem}[Identifiability]\label{thm:id}
$\fna_{\pi_0\to\pi_1}$ is identifiable if and only if, for almost all $X$, either $\pi_0(X)=\pi_1(X)$ or $\operatorname{Var}(Y\mid X,A=0)=0$ or $\operatorname{Var}(Y\mid X,A=1)=0$.
\end{theorem}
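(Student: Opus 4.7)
The plan is to reduce the identifiability question to a pointwise analysis of the conditional joint distribution of $(\Yc,\Yt)$ given $X$. Conditioning on $X$ gives
\begin{equation*}
\fna_{\pi_0\to\pi_1}=\Etr[h(X)],\quad h(X)=\Ptr(\Ypot(\pi_0(X))=1,\,\Ypot(\pi_1(X))=0\mid X).
\end{equation*}
On $\{\pi_0(X)=\pi_1(X)\}$, $h(X)=0$ automatically. On $\{\pi_0(X)\neq\pi_1(X)\}$, $h(X)$ reduces (up to a symmetric swap $\Yc\leftrightarrow\Yt$) to $p_{10}(X):=\Ptr(\Yc=1,\,\Yt=0\mid X)$, a single cell of the $2\times 2$ conditional law of $(\Yc,\Yt)\mid X$. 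By unconfoundedness and overlap together with \cref{eq:muid}, $\Pobs$ pins down the marginals $\mu_a:=\mu(X,a)$ and the propensity $e(X)$, but the joint cell probabilities $p_{ij}(X)$ are constrained only by their two marginal sums and normalization, leaving one free parameter.

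Next I would solve this $2\times 2$ Fr\'echet-Hoeffding problem explicitly: nonnegativity and marginals force $p_{11}(X)\in[\max(0,\mu_0+\mu_1-1),\min(\mu_0,\mu_1)]$, and then $p_{10}(X)=\mu_0-p_{11}(X)$. A quick case check shows that this feasible interval collapses to a single point exactly when $\min(\mu_0,\mu_1)=0$ or $\max(\mu_0,\mu_1)=1$, equivalently when $\mu_0\in\{0,1\}$ or $\mu_1\in\{0,1\}$. Since $Y\in\{0,1\}$, $\mu(X,a)\in\{0,1\}$ iff $\operatorname{Var}(Y\mid X,A=a)=0$. This immediately yields the ``if'' direction: whenever the stated condition holds $\Pobs$-a.s., $h(X)$ is a fixed function of $(\mu_0(X),\mu_1(X))$ a.s., so $\fna_{\pi_0\to\pi_1}$ depends only on $\Pobs$.

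For the ``only if'' direction, suppose the condition fails on a set $B$ of positive $X$-measure, so $\pi_0\neq\pi_1$ and $\mu_0,\mu_1\in(0,1)$ on $B$. I would construct two laws $\Ptr^{(1)},\Ptr^{(2)}$ as follows: in each, keep the observed $X$-marginal, draw $A\mid X\sim\operatorname{Bern}(e(X))$ independently of $(\Yc,\Yt)$ given $X$ (so unconfoundedness holds), and fix the conditional marginals of $\Yc,\Yt$ at $\mu_0(X),\mu_1(X)$. Both then induce the same $\Pobs$ because the coarsening $\Ccal$ reads off only these marginals and $e$. Differ only in the conditional \emph{coupling} on $B$: in $\Ptr^{(1)}$ take independence ($p_{11}=\mu_0\mu_1$), in $\Ptr^{(2)}$ take the Fr\'echet upper bound ($p_{11}=\min(\mu_0,\mu_1)$). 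A direct pointwise calculation shows $h^{(1)}(X)-h^{(2)}(X)>0$ on $B$, regardless of whether $(\pi_0,\pi_1)=(0,1)$ or $(1,0)$ at $X$, so integration over $B$ yields $\fna_{\pi_0\to\pi_1}^{(1)}>\fna_{\pi_0\to\pi_1}^{(2)}$ and hence $|\Scal(\fna_{\pi_0\to\pi_1};\Pobs)|\geq 2$.

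The main obstacle I anticipate is the measurability bookkeeping in the last step --- verifying that the two conditional couplings are jointly measurable in $X$ and that the assembled $\Ptr^{(k)}$ genuinely push forward through $\Ccal$ to the prescribed $\Pobs$ --- but this is routine Markov-kernel composition once the pointwise Fr\'echet-Hoeffding calculation is in hand. The mathematical content lies almost entirely in that pointwise $2\times 2$ analysis.
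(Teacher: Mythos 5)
Your proposal is correct, and it takes a genuinely different route from the paper. The paper deduces \cref{thm:id} as a corollary of \cref{thm:bounds}: it writes the gap $\fna^+_{\pi_0\to\pi_1}-\fna^-_{\pi_0\to\pi_1}$ as $\Eobs[\nu(X)]$ for an explicit nonnegative $\nu(X)=\indic{\pi_0(X)\neq\pi_1(X)}\min\{\mu(X,0),1-\mu(X,0),\mu(X,1),1-\mu(X,1)\}$ (which is exactly the length of your Fr\'echet interval for $p_{11}$), and then reads off that the gap vanishes iff your stated condition holds a.e. You instead prove identifiability directly: the ``if'' direction via the collapse of the pointwise $2\times2$ Fr\'echet--Hoeffding interval, and the ``only if'' direction by exhibiting two couplings (independence vs.\ the Fr\'echet upper bound) that agree on $\Pobs$ but give strictly different $h(X)$ on the positive-measure set $B$; your case check that $\mu_0(1-\mu_1)>\max(0,\mu_0-\mu_1)$ and $\mu_1(1-\mu_0)>\max(0,\mu_1-\mu_0)$ whenever $\mu_0,\mu_1\in(0,1)$ is exactly right, and the Markov-kernel assembly with $A\indep(\Yc,\Yt)\mid X$ is the same device the paper uses inside its proof of \cref{thm:bounds}. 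What your route buys is a self-contained proof of \cref{thm:id} that does not require the full interval characterization (convexity, attainment of both endpoints); what the paper's route buys is economy, since \cref{thm:bounds} is needed anyway and the identifiability statement then follows from a one-line algebraic simplification rather than a separate two-coupling construction. The mathematical content --- the pointwise Fr\'echet--Hoeffding analysis and the extremal couplings --- is shared between the two.
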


\Cref{thm:id} shows that only in degenerate cases is $\fna_{\pi_0\to\pi_1}$ identifiable.
Generally, there will exist a non-negligible (positive probability) set of $X$'s where the policies $\pi_0,\,\pi_1$ differ and neither conditional outcome distribution is constant. Then, by \cref{thm:id}, $\fna_{\pi_0\to\pi_1}$ would be unidentifiable. 

We are now prepared to state our main theorem characterizing the sharp bounds on $\fna_{\pi_0\to\pi_1}$.
Since $\fna_{\pi_0\to\pi_1}$ is generally unidentifiable, the best we can do using the data is measure the set $\Scal(\fna_{\pi_0\to\pi_1};\Pobs)$.
The \emph{sharp lower and upper bounds} are $\inf(\Scal(\fna_{\pi_0\to\pi_1};\Pobs))$ and $\sup(\Scal(\fna_{\pi_0\to\pi_1};\Pobs))$, respectively, as these are precisely the largest (resp., smallest) lower (resp., upper) bounds on the set of possible values.
The following result gives the sharp bounds and moreover shows the identified set is closed and convex so it is in fact an interval with the bounds as its endpoints.
\begin{theorem}[Sharp Bounds on FNA]\label{thm:bounds}
Fix any $\pi_0,\pi_1:\Xcal\to\{0,1\}$.
Set
\begin{align}
\label{eq:boundslo}\fna^-_{\pi_0\to\pi_1}=\Eobs[\max\{&(\pi_0(X)-\pi_1(X))\cate,\,0\}],\\
\label{eq:boundshi}\fna^+_{\pi_0\to\pi_1}=\Eobs[\min\{&
\pi_1(X)(1-\pi_0(X))\mu(X,0)+\pi_0(X)(1-\pi_1(X))(1-\mu(X,0))
,
\\&
\notag\pi_1(X)(1-\pi_0(X))(1-\mu(X,1))+\pi_0(X)(1-\pi_1(X))\mu(X,1)
\}].
\end{align}
Then, the identified set is the interval
$\Scal(\fna_{\pi_0\to\pi_1};\Pobs)=[\fna^-_{\pi_0\to\pi_1},\,\fna^+_{\pi_0\to\pi_1}]$.
\end{theorem}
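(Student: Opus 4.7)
The plan is to reduce the identification problem to a pointwise Fréchet--Hoeffding coupling problem for $(\Yc,\Yt)\mid X$ and then exhibit explicit extremal couplings that attain each endpoint. Start from the tower property:
$$\fna_{\pi_0\to\pi_1}=\Etr\bracks{\Ptr\prns{\Ypot(\pi_0(X))=1,\,\Ypot(\pi_1(X))=0\mid X}}.$$
On $\{\pi_0(X)=\pi_1(X)\}$ the inner probability is zero; on the other two events it reduces to either $\Ptr(\Yc=1,\Yt=0\mid X)$ or $\Ptr(\Yt=1,\Yc=0\mid X)$. Each depends on the conditional joint law of $(\Yc,\Yt)\mid X$, which by unconfoundedness is \emph{not} pinned down by $\Pobs$; only the marginals $\mu(X,a)=\Ptr(\Ypot(a)=1\mid X)$ are, via \eqref{eq:muid}.

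Next, I would invoke the Fréchet--Hoeffding bounds for any coupling of two Bernoullis with marginals $p_0,p_1$:
$$\max\{0,p_0-p_1\}\le\Pr(\text{first}=1,\,\text{second}=0)\le\min\{p_0,1-p_1\}.$$
Applying this conditionally on $X$ with $(p_0,p_1)=(\mu(X,0),\mu(X,1))$ on $\{\pi_0=0,\pi_1=1\}$ and with $(p_0,p_1)=(\mu(X,1),\mu(X,0))$ on $\{\pi_0=1,\pi_1=0\}$, the lower integrand collapses uniformly to $\max\{0,(\pi_0(X)-\pi_1(X))\cate\}$, matching \eqref{eq:boundslo}, while a direct case check shows the upper integrand equals the $\min\{\cdot,\cdot\}$ expression in \eqref{eq:boundshi} (with both arguments being $0$ when $\pi_0(X)=\pi_1(X)$). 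Taking $\Etr[\cdot]$ of integrands that depend only on $(X,A)$-marginals (equivalently $\Eobs[\cdot]$) yields $\fna^-_{\pi_0\to\pi_1}\le\fna_{\pi_0\to\pi_1}(\Ptr)\le\fna^+_{\pi_0\to\pi_1}$ for every admissible $\Ptr$.

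For sharpness, I would construct two admissible $\Ptr$'s attaining the endpoints. Fix the $\Pobs$-identifiable pieces---the law of $X$, the propensity $e(X)$, and the conditional marginals $\mu(X,\cdot)$---and impose $(\Yc,\Yt)\indep A\mid X$ (unconfoundedness). For the lower bound, take the \emph{comonotonic} coupling: draw $U\sim\mathrm{Unif}[0,1]$ independent of $(X,A)$ and set $\Ypot(a)=\mathbf{1}[U\le\mu(X,a)]$; a direct calculation shows this pointwise attains the Fréchet lower bound on both $\Ptr(\Yc=1,\Yt=0\mid X)$ and $\Ptr(\Yt=1,\Yc=0\mid X)$ simultaneously. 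For the upper bound, take the \emph{countermonotonic} coupling, $\Yc=\mathbf{1}[U\le\mu(X,0)]$ and $\Yt=\mathbf{1}[U\ge 1-\mu(X,1)]$, which simultaneously attains the Fréchet upper bound on both quantities. Either construction has the prescribed marginals and propensity, so $\Ptr\circ\Ccal^{-1}=\Pobs$, and is unconfounded by construction.

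Finally, the set of admissible $\Ptr$ is convex (mixing preserves both $\Pobs$-compatibility and conditional independence) and $\Ptr\mapsto\fna_{\pi_0\to\pi_1}(\Ptr)$ is linear, so mixing the two extremal laws realizes every intermediate value, giving $\Scal(\fna_{\pi_0\to\pi_1};\Pobs)=[\fna^-_{\pi_0\to\pi_1},\fna^+_{\pi_0\to\pi_1}]$. The main obstacle is the sharpness step: one has to verify that swapping the conditional joint $(\Yc,\Yt)\mid X$ for a prescribed coupling preserves both $\Pobs$ and unconfoundedness. This ends up being mostly bookkeeping, since by unconfoundedness $\Pobs$ sees $\Ptr$ only through $\mu(\cdot,\cdot)$ and $e(\cdot)$, and the two extremal couplings share these.
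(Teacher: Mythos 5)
Your proposal is correct and follows essentially the same route as the paper: apply the Fr\'echet--Hoeffding bounds to the conditional coupling of $(\Yc,\Yt)$ given $X$, exhibit extremal couplings with the identified marginals $\mu(X,\cdot)$ and propensity $e(X)$ to show attainment, and use linearity of $\Ptr\mapsto\fna_{\pi_0\to\pi_1}$ plus convexity of the feasible set to get the full interval. The only (cosmetic) difference is that you realize the extremal couplings via a latent uniform $U$ (comonotone/antimonotone) rather than writing out the $2\times2$ joint probability tables as the paper does.
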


The proof of \cref{thm:bounds} proceeds by applying the Fr\'echet-Hoeffding bounds \citep{frechet1935generalisation,ruschendorf1981sharpness} for each level of $X$ and showing this remains sharp.
\Cref{eq:boundslo,eq:boundshi} simplify a lot for the change from all-0 to all-1: 
\begin{align}\label{eq:bounds01}
&\fna^-=-\Eobs[\min\{\cate,\,0\}],
\quad\fna^+=\Eobs[\min\{
\mu(X,0),\,1-\mu(X,1)\}].
\end{align}
We can also simplify when considering the optimal policy.
Plugging in the optimal policy we get:
\begin{align}\label{eq:boundsopt}
\hspace{-0.8em}\fna^-_{1-\pi^*\to\pi^*}=0,~ 
\fna^+_{1-\pi^*\to\pi^*}=
\Eobs[\min\{\mu(X,0),\,1-\mu(X,0),\,\mu(X,1),\,1-\mu(X,1)\}]
.
\end{align}
Unlike most policies, it is always plausible that the optimal policy affects no one negatively: because $\fna^-_{1-\pi^*\to\pi^*}=0$, we always have $0\in\Scal(\fna_{1-\pi^*\to\pi^*};\Pobs)$. On the other hand, it is generally also plausible that it does, that is, 
$\fna^+_{1-\pi^*\to\pi^*}$ 
is
generally nonzero. In agreement with \cref{thm:id}, it is in fact zero \emph{only} when $\mu(X,0)\in\{0,1\}$ or $\mu(X,1)\in\{0,1\}$ for almost all $X$.

\begin{remark}[The Non-Binary Case]\label{remark:nonbin}
We can generalize \cref{thm:bounds} to the non-binary-outcome case as well. Consider now \emph{general} scalar outcomes $\Ypot(0),\Ypot(1)\in\Rl$, whether continuous, discrete, or mixed. Define the general parameter:
$$\psi_{\zeta,\delta}(\Ptr)=\Ptr(\zeta(X)\ite<\delta).$$
Then, $\fna=\Ptr(\Yt<\Yc)=\psi_{1,0}$, $\fna_{\pi_0\to\pi_1}=\psi_{\pi_1-\pi_0,0}$, and $\fna_{1-\pi\to\pi}=\psi_{2\pi-1,0}$.
\begin{theorem}\label{thm:general}
Fix $\zeta,\delta$. We then have
\begin{align*}
\sup(\Scal(\psi_{\zeta,\delta};\Pobs))&\ts=1+\E\inf_y\prns{\Pobs(\zeta(X)Y< y+\delta\mid X,A=1)-\Pobs(\zeta(X)Y\leq y\mid X,A=0)},\\
\inf(\Scal(\psi_{\zeta,\delta};\Pobs))&\ts=\E\sup_y\prns{\Pobs(\zeta(X)Y< y+\delta\mid X,A=1)-\Pobs(\zeta(X)Y\leq y\mid X,A=0)}.
\end{align*}
\end{theorem}
We recover \cref{eq:boundslo,eq:boundshi} by min/maximizing over $y\in\{-1,0,0.5,1\}$.
For inference, we focus on the binary case. The non-binary case can be handled similarly but is more complicated as it requires learning the functions $y(X)$ that optimize the above $\inf$ and $\sup$ for each $X$. (Nonetheless, even if we learn the wrong functions we will get valid, albeit not sharp, bounds; see \cref{thm:drdv,lemma:drdvonavg}.)
\end{remark}

\begin{remark}[Tail Expectations of Individual Treatment Effects]
\citet{kallus2022treatment} considers bounds on the conditional value at risk (CVaR) of individual treatment effects (ITEs): for $\alpha\in(0,1]$, $$\ts\operatorname{CVaR}_{\alpha}(\ite)=\sup_\beta\Etr\bracks{\beta+\alpha^{-1}\min\{\ite-\beta,\,0\}}.$$ This is equal to the smallest subgroup-ATE among all $\alpha$-sized fractions of the population, that is, the average effect on the $(100\alpha)\%$-worst affected. It is generally unidentifiable from data. 
While \citet{kallus2022treatment} considers general real-valued outcomes, in the special case of binary outcomes considered here, $\operatorname{CVaR}_{\alpha}(\ite)$ is a function of just FNA and ATE:
\begin{lemma}\label{lemma:cvar} For $\alpha\in(0,1)$,
$\operatorname{CVaR}_{\alpha}(\ite)=\max\{-1,\,-\alpha^{-1}\fna,\,1-\alpha^{-1}(\ate-1)\}$.
\end{lemma}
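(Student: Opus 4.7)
}
The plan is to exploit the fact that, because outcomes are binary, $\ite\in\{-1,0,1\}$, so the variational definition of CVaR reduces to an optimization over essentially three candidate values of $\beta$. Writing $f(\beta)=\beta+\alpha^{-1}\Etr[\min\{\ite-\beta,0\}]$, I would first observe that $f$ is concave and piecewise linear in $\beta$: each term $\min\{\ite(\omega)-\beta,0\}$ is the minimum of two affine functions of $\beta$, and expectation preserves concavity. The only kinks of $f$ occur where $\ite-\beta$ changes sign for some atom of the law of $\ite$, so the breakpoints lie in $\{-1,0,1\}$.

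Next, I would argue that the supremum is attained at one of these three breakpoints. On the extreme segments, the slope of $f$ is computable directly: for $\beta\le -1$, $\min\{\ite-\beta,0\}\equiv 0$ so $f(\beta)=\beta$ has slope $+1$; for $\beta\ge 1$, $\min\{\ite-\beta,0\}=\ite-\beta$ a.s.\ so $f$ has slope $1-\alpha^{-1}<0$ since $\alpha\in(0,1)$. Combined with concavity, this pins the optimizer to $\beta^*\in\{-1,0,1\}$.

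The remaining step is a direct calculation of $f$ at these three points. Let $p_{-1}=\Ptr(\ite=-1)=\fna$, $p_0=\Ptr(\ite=0)$, $p_1=\Ptr(\ite=1)$, so that $p_1-\fna=\ate$. Then
\begin{align*}
f(-1)&=-1+\alpha^{-1}\Etr[\min\{\ite+1,0\}]=-1,\\
f(0)&=\alpha^{-1}\Etr[\min\{\ite,0\}]=-\alpha^{-1}\fna,\\
f(1)&=1+\alpha^{-1}(-2\fna-p_0)=1-\alpha^{-1}(1-\ate),
\end{align*}
using $2\fna+p_0=\fna+(1-p_1)=1-\ate$ for the last line. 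Taking the maximum of the three yields the claimed expression.

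The only real subtlety, and the part I would handle most carefully, is ruling out that the sup is not attained in the interior of an extreme segment or at $\pm\infty$; once that is settled the rest is algebra. An equivalent and perhaps cleaner way to present the argument is via the quantile representation $\operatorname{CVaR}_\alpha(\ite)=\alpha^{-1}\int_0^\alpha F_{\ite}^{-1}(u)\,du$ and a case split on whether $\alpha$ falls into $(0,\fna]$, $(\fna,1-p_1]$, or $(1-p_1,1)$, recovering the three candidate values as the CVaR on each regime; the overall maximum reproduces the max formulation because on each regime the corresponding candidate dominates the other two.
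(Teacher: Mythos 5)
Your proposal is correct and follows essentially the same route as the paper: reduce the supremum over $\beta$ to the three candidate points $\{-1,0,1\}$ using the piecewise-linear structure of the objective, evaluate, and substitute $\Ptr(\ite=-1)=\fna$ and $\Ptr(\ite=0)=1-\ate-2\fna$. Your concavity-plus-endpoint-slopes argument for why the supremum is attained at a breakpoint is in fact a bit more careful than the paper's terse ``the objective tends to $-\infty$ in both directions.''

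One substantive remark: your value $f(1)=1-\alpha^{-1}(1-\ate)=1+\alpha^{-1}(\ate-1)$ is the correct one, and it \emph{disagrees} with the lemma as printed, which has $1-\alpha^{-1}(\ate-1)$. The printed version cannot be right, since $1-\alpha^{-1}(\ate-1)=1+\alpha^{-1}(1-\ate)\geq 1$ would force $\operatorname{CVaR}_\alpha(\ite)\geq 1$ always, impossible for a variable bounded above by $1$ unless $\ite=1$ a.s. The paper's own proof derives the same intermediate expression $1-\alpha^{-1}(2\fna+p_0)$ that you do and then says ``substituting yields the result,'' so the sign error lives in the lemma statement, not in your argument.
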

As a consequence of \cref{thm:bounds,lemma:cvar}, we immediately have that the identified set for $\operatorname{CVaR}_{\alpha}(\ite)$ for $\alpha\in(0,1)$ is the 
interval
$\Scal(\operatorname{CVaR}_{\alpha}(\ite);\Pobs)=[\max\{-1,\,-\alpha^{-1}\fna^+,\,1-\alpha^{-1}(\ate-1)\},
\,\max\{-1,\,-\alpha^{-1}\fna^-,\,1-\alpha^{-1}(\ate-1)\}]$,
with $\fna^\pm$ as defined in \cref{eq:bounds01}.
Consequently, the endpoints 
 are the sharp lower and upper bounds on $\operatorname{CVaR}_{\alpha}(\ite)$.

\citet{kallus2022treatment} gives the upper bound $\operatorname{CVaR}_{\alpha}(\ite)\leq \operatorname{CVaR}_{\alpha}(\cate)$ and shows it is tight given only $\cate$ in the sense that is always realizable by some $\Ptr$ with the same distribution of $\cate$.
However, this bound is generally not sharp, that is, it need not be realizable given $\Pobs$, which characterizes more than just $\cate$.
In contrast, plugging $\fna^-$ into \cref{lemma:cvar}
gives
the sharp upper bound in the special case of binary outcomes, that is, this bound fully uses \emph{all} the information in $\Pobs$.
Moreover, we have a finite sharp lower bound, whereas \citet{kallus2022treatment} shows that no lower bound exists when given only $\cate$. Nonetheless, \citet{kallus2022treatment} crucially handles general real-valued outcomes. In particular, the CVaR curve as $\alpha$ varies is most interesting in settings with non-binary outcomes, since in binary settings, two numbers -- the ATE and FNA -- summarize all relevant information
per \cref{lemma:cvar}.
\end{remark}

\section{Inference Methodology}

In the previous section we characterized the sharp (meaning, tightest possible) bounds on FNA given the distribution of the data, that is, the best we could do if we had infinite data. We now address how to estimate these from actual data and characterize the uncertainty. That is, how to do inference on the parameters $\fna_{\pi_0\to\pi_1}^\pm$. Recall our data is $n$ independent draws $(X_i,A_i,Y_i)\sim\Pobs$ for $i=1,\dots,n$.

In order to systematically handle all of parameters of interest,
we will develop a generic, robust method for a large class of parameters that include all of our parameters of interest as special cases. 

\subsection{Average Hinge Effects}

We consider a class parameters we call Average Hinge Effects (AHEs), parametrized by a positive integer $m\in\NN$, a vector of signs $\rho\in\{-1,+1\}^m$, and a set of $3(m+1)$ functions $g_\ell(x)=(g_\ell\s{0}(x),g_\ell\s{1}(x),g_\ell\s{2}(x))\in[-1,1]^3$ for $\ell=0,\dots,m$:
\begin{align}
\label{eq:ahe}\ts\ahe^\rho_{g_0,\dots,g_m}=\Eobs\bigl[&g_0\s{0}(X)\mu(X,0)+g_0\s{1}(X)\mu(X,1)+g_0\s{2}(X)\\&\ts+\sum_{\ell=1}^m\rho_\ell\min\{0,\,g_\ell\s{0}(X)\mu(X,0)+g_\ell\s{1}(X)\mu(X,1)+g_\ell\s{2}(X)\}\bigr].\notag
\end{align}
The AHE is so called because $\min\{0,x\}$ is called the hinge function. 
AHEs cover all of our parameters of interest: 
$\fna^{-}=\ahe^{(-)}_{(0,0,0),(-1,1,0)}$,
$\fna^{+}=\ahe^{(+)}_{(1,0,0),(-1,-1,1)}$,
$\fna^{-}_{\pi_0\to\pi_1}=\ahe^{(-)}_{(0,0,0),(\pi_0-\pi_1,\,\pi_1-\pi_0,\,0)}$, 
$\fna^{+}_{\pi_0\to\pi_1}=\ahe^{(+)}_{(\pi_0(1-\pi_1),\,\pi_1-\pi_0,0),\,(\pi_0-\pi_1,\,\pi_0-\pi_1,\,-1)}$, and
$\fna^+_{1-\pi^*\to\pi^*}=\ahe^{(+,+)}_{(1,0,0),(-1,1,0),(-1,-1,1)}$.
 For now, we focus on any given $m,\rho,g_0,\dots,g_m$.

\subsection{Re-formulating the AHE}

The AHE is defined as an average of a function of $X$. The only unknown part of this function is $\mu(x,a)$. A na\"ive approach would be to estimate $\mu$ by $\hat\mu$, plug it into \cref{eq:ahe} and replace $\E$ by a sample average over $X_1,\dots,X_n$.
This estimate, however, will \emph{not} behave like a sample average approximation of \cref{eq:ahe}. In particular, it has a nonzero derivative in $\hat\mu$, so that the errors in $\hat\mu$ directly translate to errors in AHE estimation.
For example, even if $\hat\mu$ converges as fast as $O_p(1/\sqrt{n})$, this will introduce significant errors on the same order as the convergence of sample averages, which can imperil both efficiency and inference. And, if it converges more slowly, as would generally be the case when using flexible machine learning methods for estimating (as parametric ones will inevitably be misspecified), it can even affect the convergence rate of the final estimate. 
Worst yet, if $\hat\mu$ is inconsistent for $\mu$ we have no hope of consistency for our estimate or an understanding of the direction of its bias.
Generally, 
we would like to avoid depending on how exactly we fit $\hat\mu$, provide guarantees even when it is estimated slowly and regardless of what method is used to estimate it, and even provide guarantees when its estimation is inconsistent.

Our approach is based on focusing on an alternative formulation of AHE as a sample average where plugging in wrong values for some unknown functions does not affect the formulation too much. 

Toward that end, 
let us first define $\eta_\ell:\Xcal\to[-3,3]$ by
\begin{equation}\label{eq:eta}
\eta_\ell(x)=\,g_\ell\s{0}(X)\mu(X,0)+g_\ell\s{1}(X)\mu(X,1)+g_\ell\s{2}(X).
\end{equation}
Specifically,
we have $\eta_1(X)=(\pi_1(X)-\pi_0(X))\cate$ for $\fna_{\pi_0\to\pi_1}^-$,
$\eta_1(X)=(\pi_1(X)-\pi_0(X))(1-\cats)$ for $\fna_{\pi_0\to\pi_1}^+$, and $\eta_1(X)=\cate,\,\eta_2(X)=1-\cats$ for $\fna_{1-\pi^*\to\pi^*}^+$.

Given $\check e,\check \mu,\check\eta_{1},\ldots,\check\eta_{m}$, understood as (possibly wrong) stand-ins for $e, \mu,\eta_{1},\ldots,\eta_{m}$, define
\begin{align*}
\phi^\rho_{g_0,\dots,g_m}(X,A,Y;&~\check e,\check \mu,\check\eta_{1},\ldots,\check\eta_{m})=\ts
g_0\s{2}(X)+\sum_{\ell=1}^m\rho_\ell\indic{\check\eta_\ell(X)\leq0}g_\ell\s{2}(X)
\\&\ts+
\prns*{g_0\s{0}(X)+\sum_{\ell=1}^m\rho_\ell\indic{\check\eta_\ell(X)\leq0}g_\ell\s{0}(X)}\frac{\prns{A-\check e(X)}\check\mu(X,0)+(1-A)Y}{1-\check e(X)}
\\&\ts+
\prns*{g_0\s{1}(X)+\sum_{\ell=1}^m\rho_\ell\indic{\check\eta_\ell(X)\leq0}g_\ell\s{1}(X)}\frac{\prns{\check e(X)-A}\check\mu(X,1)+AY}{\check e(X)}
.\end{align*}
If we plug in the correct values of the nuisance parameters $e, \mu,\eta_{1},\ldots,\eta_{m}$, then we would have \begin{equation}\label{eq:ahephi}\ahe_{g_0,\dots,g_m}^\rho=\Eobs[\phi_{g_0,\dots,g_m}^\rho(X,A,Y;e, \mu,\eta_{1},\ldots,\eta_{m})].\end{equation}
Crucially, unlike the definition of AHE as an average in \cref{eq:ahe}, we will show in \cref{sec:robustness} that this new representation as an average remains faithful \emph{even} when we make small errors in the nuisances. 

\subsection{Inference Algorithm}

Given our special construction of $\phi_{g_0,\dots,g_m}^\rho$, our algorithm, presented in \cref{alg:est}, follows a simple recipe:
we estimate the nuisances in a cross-fold fashion and plug them into \cref{eq:ahephi}, approximating the expectation by a sample average.
In \cref{sec:robustness} we will show that the special structure of $\phi_{g_0,\dots,g_m}^\rho$ affords this procedure a lot of robustness.
The use of cross-fitting ensures nuisance estimates are independent of samples applied thereto \citep{schick1986,doubleML,zheng2011cross}.
This allows our analysis to only require lax rate assumptions on the nuisance fitting without requiring any additional regularity on restricting \emph{how} the fitting is done. (If we assume nuisance estimates belong to a Donsker class with probability tending to 1, all of our results will hold even without cross-fitting; we omit this option for brevity.)

\subsection{Nuisance Fitting}

Our algorithm requires methods for fitting the nuisances $e, \mu,\eta_{1},\ldots,\eta_{m}$. Fitting $e$ and $\mu$ amounts to binary regressions (probabilistic classification), which can be done with any of a variety of standard supervised learning methods, whether logistic regression, random forests, or neural nets.

There are different ways to fit $\eta_\ell$.
In particular, despite the fact that $\eta_\ell$ is determined by $\mu$, we treat it as a separate nuisance to allow for specialized learners.
Of course, the simplest way to fit it is to simply plugin an estimate for $\mu$ into the definition of $\eta_\ell$ in \cref{eq:eta}, and that is a legitimate option. For all of our parameters of interest, all $\eta_\ell$ nuisances are given by learning just two functions: $\catef$ and $\catsf$ as in \cref{eq:cate}. Therefore, we may alternatively learn these directly and plug them into $\eta_\ell$. For example, there exists a wide literature specialized to learning $\catef$ motivated by the observation that effect signals are often more nuanced and can therefore be washed out by the noise in baselines if we simply difference $\mu$-estimates \citep{slearner,xlearner,drlearner,rlearner,causaltree,causalforest}. We can similarly apply of all these approaches to learning $\catsf$ by simply first mutating the outcome data as $Y\gets (2A-1)Y$. Notably, \citep{drlearner,rlearner} give rates of convergence for $\catef$-learning, which we can use to satisfy our assumptions in the next section.

\begin{algorithm}[t!]\footnotesize
\caption{Point estimate and CIs for $\ahe^\rho_{g_0,\dots,g_m}$}\label{alg:est}
\begin{algorithmic}[1]
\STATEx\textbf{Input:} Data $\{(X_i,A_i,\Yobs_i):i=1,\dots,n\}$, number of folds $K$, estimators for $e,\mu,\eta_1,\ldots,\eta_m$
\FOR{$k=1,\dots,K$}
\STATE{Estimate $\hat e^{(k)},\hat\mu^{(k)},\hat\eta_1^{(k)},\dots,\hat\eta_m^{(k)}$ using data $\{(X_i,A_i,\Yobs_i):i\not\equiv k-1~\text{(mod $K$)}\}$}
\STATE{\textbf{for}~$i\equiv k-1~\text{(mod $K$)}$~\textbf{do}~set $\phi_{i}=\phi_{g_0,\dots,g_m}^\rho(X_i,A_i,Y_i;\hat e^{(k)},\hat \mu^{(k)},\hat \eta_1^{(k)},\ldots,\hat \eta_m^{(k)})
$\label{alg:est phistep}}
\ENDFOR
\STATE{Set $\widehat\ahe^\rho_{g_0,\dots,g_m}=\frac1n\sum_{i=1}^n\phi_{i}$, $\hat{\operatorname{se}}=\sqrt{\frac1{n(n-1)}\sum_{i=1}^n(\phi_{i}-\widehat\ahe^\rho_{g_0,\dots,g_m})^2}$\label{alg:est psistep}}
\STATE{Return point estimate $\widehat\ahe^\rho_{g_0,\dots,g_m}$ and $\beta$-CIs $[\widehat\ahe^\rho_{g_0,\dots,g_m}\pm \Phi^{-1}((1+\beta)/2)\hat{\operatorname{se}}]$}
\end{algorithmic}
\end{algorithm}

\section{Robustness Guarantees for Inference}\label{sec:robustness}

We now show our inference method has some nice robustness guarantees. This will depend on showing that using our special $\phi$ renders \cref{eq:ahephi} insensitive to errors in the nuisances.
The specific level of errors allowed will depend on the \emph{sharpness} of margin satisfied by $\eta_\ell$, as defined below.
\begin{definition}
We say that a given function $f:\Xcal\to\RR$ satisfies a margin with sharpness $\alpha\in[0,\infty]$ (or, an $\alpha$-margin) if there exists $t_0>0$ such that $\Pobs\prns{0<\abs{f(X)}\leq t}\leq (t/t_0)^\alpha$ for all $t>0$. (We use the convention that $x^\infty=0$ if $0\leq x<1$, $1^\infty=1$, and $x^\infty=\infty$ if $x>1$.)
\end{definition}

Every $f$ trivially satisfies a 0-margin, but usually a sharper margin holds.
If $f(X)$ is almost surely either zero or bounded away from zero, then $f$ satisfies an infinitely sharp margin:

\begin{lemma}\label{lemma:inftymargin}
If $\Prb{\abs{f(X)}\geq t_0}=\Prb{f(X)\neq0}$ for some $t_0>0$ then $f$ satisfies an $\infty$-margin.
\end{lemma}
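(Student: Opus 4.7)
The plan is to directly verify the margin inequality in the definition with the same $t_0$ as given in the hypothesis. The hypothesis $\Prb{|f(X)| \geq t_0} = \Prb{f(X) \neq 0}$ is equivalent to saying that $\Prb{0 < |f(X)| < t_0} = 0$, i.e., almost surely $f(X)$ is either zero or has absolute value at least $t_0$. This single fact is the whole content of the hypothesis, and I expect it to drive the argument immediately.

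First I would split the verification into three regimes based on the comparison of a test value $t>0$ with $t_0$. For $t < t_0$, I use the observation above: $\Pobs(0 < |f(X)| \leq t) \leq \Pobs(0 < |f(X)| < t_0) = 0$, while on the right-hand side $(t/t_0)^\infty = 0$ by the paper's convention since $0 \leq t/t_0 < 1$. For $t = t_0$, the right-hand side equals $1^\infty = 1$, which trivially dominates any probability. For $t > t_0$, we have $(t/t_0)^\infty = \infty$ and the inequality is again vacuous.

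The only subtlety worth mentioning is the correct handling of the strict-versus-nonstrict inequality when passing from $\Prb{|f(X)| \geq t_0}$ to $\Prb{0 < |f(X)| < t_0}$; this is immediate from disjointness of the events $\{f(X)=0\}$, $\{0 < |f(X)| < t_0\}$, and $\{|f(X)| \geq t_0\}$, whose probabilities sum to $1$. There is no real obstacle here; the lemma is essentially a direct unpacking of the two definitions together with the convention on $x^\infty$, so the proof should be a short half-page at most.
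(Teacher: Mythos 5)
Your proof is correct: the hypothesis is exactly the statement that $\Pobs(0<\abs{f(X)}<t_0)=0$, and your case split on $t<t_0$, $t=t_0$, $t>t_0$ verifies the margin definition with the given $t_0$ under the paper's convention for $x^\infty$. The paper omits any proof of this lemma (treating it as immediate), so there is nothing to compare against; your direct verification is the natural argument and is complete.
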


In particular, if $X$ has finite support, then we can always set $t_0=\inf\{\abs{f(x)}:x\in\Xcal,f(x)\neq0\}$.
But, this also works more generally. Consider $\eta_1(X)=(\pi_1(X)-\pi_0(X))\cate$ for $\fna_{\pi_0\to\pi_1}^-$.
If $\inf\{\abs{\catef(x)}:\pi_1(x)\neq\pi_0(x),\catef(x)\neq0\}>0$, that is, the policies only differ either far from the boundary where $\catef$ is 0 or exactly on this boundary, then \cref{lemma:inftymargin} ensures sharpness $\infty$.

More generally, we should expect a margin with sharpness 1 in continuous settings.

\begin{lemma}\label{lemma:1margin}
If the cumulative distribution function (CDF) of $f(X)$ is boundedly differentiable on $(-\epsilon,0)\cup(0,\epsilon)$ for some $\epsilon>0$, then $f$ satisfies a $1$-margin.
\end{lemma}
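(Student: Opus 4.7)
Let $F(x) = \Pobs(f(X) \leq x)$ denote the right-continuous CDF of $f(X)$, and let $M$ uniformly bound $|F'|$ on $(-\epsilon, 0) \cup (0, \epsilon)$, which exists by hypothesis. The plan is to split the target probability by the sign of $f(X)$ and control each piece via a Lipschitz estimate on $F$ inherited from the bounded-derivative assumption.

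First I would write, for $0 < t < \epsilon$,
\begin{equation*}
\Pobs(0 < |f(X)| \leq t) \;=\; [F(t) - F(0)] \;+\; [F(0^-) - F(-t)],
\end{equation*}
where $F(0^-) = \lim_{s \uparrow 0} F(s)$ and I have used that $F$ is continuous at $-t$ (since $F$ is differentiable on $(-\epsilon,0)$). Any point mass of $f(X)$ at $0$ (corresponding to a possible jump $F(0) - F(0^-)$) drops out because the event $\{f(X)=0\}$ is explicitly excluded on the left. Since $F$ is differentiable with $|F'| \leq M$ on $(0, \epsilon)$, the mean value theorem makes $F$ Lipschitz with constant $M$ on that interval, so $F(t) - F(s) \leq M(t-s)$ for all $s \in (0,t)$; letting $s \downarrow 0$ and using right-continuity $F(0^+) = F(0)$ of any CDF yields $F(t) - F(0) \leq Mt$. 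An analogous one-sided limit on $(-\epsilon, 0)$ yields $F(0^-) - F(-t) \leq Mt$, and combining gives $\Pobs(0 < |f(X)| \leq t) \leq 2Mt$ for all $0 < t < \epsilon$.

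For $t \geq \epsilon$, the trivial bound $\Pobs(0 < |f(X)| \leq t) \leq 1 \leq t/\epsilon$ takes over. Setting $t_0 = \min\{1/(2M),\, \epsilon\}$ unifies both regimes and yields $\Pobs(0 < |f(X)| \leq t) \leq t/t_0$ for every $t > 0$, which is the claimed $1$-margin. The only subtlety is the possible jump of $F$ at $0$; the decomposition above isolates it and the built-in right-continuity of CDFs resolves it, so no real obstacle is expected beyond this bookkeeping.
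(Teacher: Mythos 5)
Your proof is correct, and it is the natural argument: the bounded derivative gives a Lipschitz bound on each side of $0$ via the mean value theorem, the one-sided limits correctly exclude any atom at $0$, and the trivial bound handles $t\geq\epsilon$; the paper itself omits a proof of this lemma, so there is nothing to compare against. The only cosmetic remark is the degenerate case $M=0$, where $1/(2M)$ should be read as $+\infty$ so that $t_0=\epsilon$; the bound still holds.
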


Consider again the example of $\eta_1(X)=(\pi_1(X)-\pi_0(X))\cate$. If $\cate$ is a continuous random variable with a bounded density near $0$, then \cref{lemma:1margin} ensures a margin with sharpness 1. This holds for \emph{any} two policies, including $\pi_1=1,\,\pi_0=0$, which differ everywhere.
The same conclusion would also hold if the CDF of $\cate$ was just continuously differentiable at $0$.

Armed with the notion of margin, we can state in what sense the representation \cref{eq:ahephi} is robust.

\begin{theorem}\label{lemma:orthogonal}
Fix any 
$\tilde e,\check e:\Xcal\to[\bar e,\,1-\bar e]$ with $\bar e>0$, 
$\tilde \mu,\check \mu:\Xcal\times\{0,1\}\to[0,1]$,
$\tilde \eta_{\ell},\check\eta_\ell:\Xcal\to[-3,3]$ for $\ell=1,\dots,m$. 
Fix any $p_1,\dots,p_\ell\in[1,\infty]$.
Suppose that either $\tilde e=e$ or $\tilde \mu=\mu$ (or both).
Set $\kappa_\ell=1$ if $\tilde \eta_\ell=\eta_\ell$ and otherwise set $\kappa_\ell=0$. 
Further assume that $\Prb{\tilde\eta_\ell(X)=0,\,\check\eta_\ell(X)\neq0}=0$.
Finally, suppose that $\tilde\eta_\ell$ satisfies a margin with sharpness $\alpha_\ell$. Then, for some constant $c>0$, we have
\begin{align*}
&\ts c\abs{
\Eobs[\phi_{g_0,\dots,g_m}^\rho(X,A,Y;\check e, \check\mu,\check\eta_{1},\ldots,\check\eta_{m})]
-
\Eobs[\phi_{g_0,\dots,g_m}^\rho(X,A,Y;\tilde e, \tilde\mu,\tilde\eta_{1},\ldots,\tilde\eta_{m})]
}
\\[-1.5ex]&\ts\qquad\qquad\qquad\qquad\qquad\qquad\qquad\qquad\qquad\leq
\magd{ \check e-  e}_2
\magd{ \check \mu- \mu}_2
+
\sum_{\ell=1}^m\magd{\check \eta_\ell-\tilde \eta_\ell}_{p_\ell}^{\frac{p_\ell(\kappa_\ell+\alpha_\ell)}{p_\ell+\alpha_\ell}},\\
&\ts c\magd{\phi_{g_0,\dots,g_m}^\rho(X,A,Y;\check e, \check\mu,\check\eta_{1},\ldots,\check\eta_{m})
-
\phi_{g_0,\dots,g_m}^\rho(X,A,Y;\tilde e, \tilde\mu,\tilde\eta_{1},\ldots,\tilde\eta_{m})}_2
\\[-1.5ex]&\ts\qquad\qquad\qquad\qquad\qquad\qquad\qquad\qquad\qquad\leq
\magd{\check e-\tilde e}_2+
\magd{\check \mu-\tilde \mu}_2
+
\sum_{\ell=1}^m\magd{\check \eta_\ell-\tilde \eta_\ell}_{p_\ell}^{\frac{p\alpha_\ell}{2p+2\alpha_\ell}}.
\end{align*}
In the above, we use the convention that $\frac{\infty a+b}{\infty c+d}=\frac{a}{c}$.
\end{theorem}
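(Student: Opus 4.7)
The plan is to exploit the augmented-IPW structure of $\phi$ by first rewriting $\phi_{g_0,\dots,g_m}^\rho(X,A,Y;e,\mu,\eta_1,\dots,\eta_m)=h_2(X;\eta)+h_0(X;\eta)\,\psi_0(X,A,Y;e,\mu)+h_1(X;\eta)\,\psi_1(X,A,Y;e,\mu)$, where $h_j(X;\eta):=g_0\s{j}(X)+\sum_{\ell=1}^m\rho_\ell\indic{\eta_\ell(X)\leq0}g_\ell\s{j}(X)$ and $\psi_0,\psi_1$ are the classical doubly-robust scores satisfying $\Eobs[\psi_a(e,\mu)\mid X]=\mu(X,a)$. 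I will then split $\phi(\check e,\check\mu,\check\eta)-\phi(\tilde e,\tilde\mu,\tilde\eta)$ into an ``indicator swap'' piece (holding $(\check e,\check\mu)$ fixed while swapping $\check\eta\mapsto\tilde\eta$) plus a ``DR swap'' piece (holding $\tilde\eta$ fixed while swapping $(\check e,\check\mu)\mapsto(\tilde e,\tilde\mu)$) and bound each piece separately.

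For the bias bound (first inequality), a direct conditional-expectation calculation gives $\Eobs[\psi_0(\check e,\check\mu)\mid X]-\mu(X,0)=(e-\check e)(\check\mu(X,0)-\mu(X,0))/(1-\check e)$ and analogously for $\psi_1$. Under the assumption $\tilde e=e$ or $\tilde\mu=\mu$, the analogous quantity for $(\tilde e,\tilde\mu)$ vanishes, so by Cauchy--Schwarz and overlap the DR-swap piece is at most a constant times $\|\check e-e\|_2\|\check\mu-\mu\|_2$. For the indicator-swap piece, I would condition on $X$ and write the coefficient of $\rho_\ell(\indic{\check\eta_\ell\leq0}-\indic{\tilde\eta_\ell\leq0})$ as $\eta_\ell(X)+r_\ell(X)$, where $r_\ell$ is the residual DR bias of pointwise magnitude $\lesssim|e-\check e||\check\mu-\mu|$; the $r_\ell$ contribution is absorbed directly into $\|\check e-e\|_2\|\check\mu-\mu\|_2$ by Cauchy--Schwarz. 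What remains to bound is $\Eobs[|\tilde\eta_\ell|\indic{\text{sign change}}]$ when $\kappa_\ell=1$ (using $\tilde\eta_\ell=\eta_\ell$) or a bounded multiple of $\Prb{\text{sign change}}$ when $\kappa_\ell=0$. The key geometric observation, enabled by $\Prb{\tilde\eta_\ell=0,\check\eta_\ell\neq 0}=0$, is that the sign-change event is contained in $\{0<|\tilde\eta_\ell|\leq|\check\eta_\ell-\tilde\eta_\ell|\}$; I would then peel at a free threshold $c>0$, bounding the small-margin part via the $\alpha_\ell$-margin as $(c/t_0)^{\alpha_\ell}$ or $c^{1+\alpha_\ell}/t_0^{\alpha_\ell}$ and the large-error part via Markov/H\"older as $\|\check\eta_\ell-\tilde\eta_\ell\|_{p_\ell}^{p_\ell}/c^{p_\ell-\kappa_\ell}$, and optimize in $c$ to obtain the exponent $p_\ell(\kappa_\ell+\alpha_\ell)/(p_\ell+\alpha_\ell)$.

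For the $L^2$ bound (second inequality), I would apply the triangle inequality pointwise to the same decomposition. The DR-swap contribution to each of $\psi_0,\psi_1$ is Lipschitz in the nuisances with constant controlled by overlap, yielding $\|\check e-\tilde e\|_2+\|\check\mu-\tilde\mu\|_2$. The indicator-swap contribution is a bounded quantity times $\indic{\text{sign change}}$, so its $L^2$ norm is $\lesssim\sqrt{\Prb{\text{sign change}}}$; applying the $\kappa_\ell=0$ peeling bound inside the square root yields the exponent $p_\ell\alpha_\ell/(2(p_\ell+\alpha_\ell))$ from the statement.

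The main obstacle I anticipate is the peeling bookkeeping: verifying that the residual $r_\ell$ in the $\kappa_\ell=1$ decomposition is truly absorbable into the DR product rate rather than contributing at the slower indicator rate, tracking the correct Markov/H\"older exponent depending on whether the integrand carries a weight of $|\tilde\eta_\ell|$, and handling the boundary conventions at $p_\ell=\infty$ and $\alpha_\ell=\infty$ (where the small-margin piece must vanish above $c=t_0$). Once the peeling is in hand, the rest is standard Cauchy--Schwarz and triangle-inequality accounting under overlap.
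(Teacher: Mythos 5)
Your proposal is correct and follows essentially the same route as the paper: the same telescoping split into a doubly-robust nuisance swap (bounded via the product-of-errors identity and Cauchy--Schwarz) and an indicator swap (bounded via containment of the sign-change event in $\{0<|\tilde\eta_\ell|\le|\check\eta_\ell-\tilde\eta_\ell|\}$ plus the margin-peeling argument, which the paper packages as its Lemma on Audibert--Tsybakov-style bounds). The only difference is that you swap the indicators while holding $(\check e,\check\mu)$ rather than $(\tilde e,\tilde\mu)$ fixed, which creates the residual $r_\ell$ you correctly absorb into the product term; the paper's ordering avoids that residual entirely but is otherwise identical.
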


\subsection{Local Robustness, Confidence Intervals, and Efficiency}

\Cref{lemma:orthogonal} has several important implications.
Our first result shows that, if we estimate all nuisances correctly but potentially slowly at nonparametric rates, then we can ensure our estimator looks like a sample average approximation of \cref{eq:ahephi} and our CIs are calibrated.

\begin{theorem}\label{thm:ral}
Suppose $e(X)\in[\bar e,1-\bar e]$ with $\bar e>0$, $\eta_\ell$ satisfies an $\alpha_\ell$-margin, and for $k=1,\dots,K$ $\|\hat e\s k-e\|_2\|\hat \mu\s k-\mu\|_2=o_p(n^{-\frac12})$, $\|\hat\eta_\ell\s k-\eta_\ell\|_{p_\ell}=o_p(n^{-\frac{p_\ell+\alpha_\ell}{2p_\ell(1+\alpha_\ell)}})$ with $p_\ell\in[1,\infty]$, and $\Pobs\prns*{\bar e\leq\hat e\s k(x)\leq1-\bar e,\,0\leq\hat\mu\s k(x)\leq1,\,\abs*{\hat\eta_\ell\s k(x)}\leq 3\II[\eta_\ell(x)\neq0],\,\ell\leq m,\,\text{a.e.~$x$}}\to1$. Then,
\begin{align*}
&\ts\widehat\ahe^\rho_{g_0,\dots,g_m}=\frac1n\sum_{i=1}^n\phi_{g_0,\dots,g_m}^\rho(X_i,A_i,Y_i;e, \mu,\eta_{1},\ldots,\eta_{m})+o_p(n^{-\frac12}),
\\
&\ts\Pobs\prns*{\ahe^\rho_{g_0,\dots,g_m}\in\bracks*{\widehat\ahe^\rho_{g_0,\dots,g_m}\pm\Phi^{-1}((1+\beta)/2)\hat{\operatorname{se}}}}\to\beta\quad\forall\beta\in(0,1).
\end{align*}
\end{theorem}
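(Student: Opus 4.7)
I will follow the standard cross-fitted orthogonal-score argument; the two inequalities in \Cref{lemma:orthogonal} are precisely tailored to the two error terms that appear. Write $\phi_i^\star = \phi^\rho_{g_0,\dots,g_m}(X_i,A_i,Y_i;e,\mu,\eta_1,\dots,\eta_m)$ for the score at the true nuisances, and $\phi_i^{(k)}$ for the corresponding score with the cross-fit estimates $\hat e^{(k)},\hat\mu^{(k)},\hat\eta_\ell^{(k)}$ (i.e., the $\phi_i$ in \Cref{alg:est}). Decompose
\[
\widehat\ahe^\rho_{g_0,\dots,g_m}-\ahe^\rho_{g_0,\dots,g_m} = \Big(\tfrac1n\sum_{i=1}^n\phi_i^\star - \ahe^\rho_{g_0,\dots,g_m}\Big) + B_n + R_n,
\]
with $B_n = \frac1n\sum_i \Eobs[\phi_i^{(k(i))}-\phi_i^\star\mid\text{training fold}]$ the conditional bias and $R_n$ the centered stochastic remainder. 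Asymptotic linearity then reduces to showing $B_n,R_n = o_p(n^{-1/2})$.

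For $B_n$, I apply the first inequality of \Cref{lemma:orthogonal} fold-by-fold with $\tilde\cdot$ set to the true nuisances, so that $\kappa_\ell=1$ and the $\eta$-exponent becomes $p_\ell(1+\alpha_\ell)/(p_\ell+\alpha_\ell)$. The rate hypothesis on $\hat\eta_\ell$ is calibrated precisely so that, raised to this power, it becomes $o_p(n^{-1/2})$; the product rate $\|\hat e^{(k)}-e\|_2\|\hat\mu^{(k)}-\mu\|_2 = o_p(n^{-1/2})$ directly handles the leading cross term. Summing over the $K$ folds preserves the rate, yielding $B_n = o_p(n^{-1/2})$.

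For $R_n$, cross-fitting makes the fold-$k$ summands conditionally i.i.d.\ and mean zero given the training data, so conditional Chebyshev gives $R_n = O_p(n^{-1/2}\max_k\|\phi^{(k)}-\phi^\star\|_2)$. The second inequality of \Cref{lemma:orthogonal} bounds $\|\phi^{(k)}-\phi^\star\|_2$ by $\|\hat e^{(k)}-e\|_2+\|\hat\mu^{(k)}-\mu\|_2+\sum_\ell\|\hat\eta_\ell^{(k)}-\eta_\ell\|_{p_\ell}^{p_\ell\alpha_\ell/(2p_\ell+2\alpha_\ell)}$, and each summand is $o_p(1)$: the $\eta$-rate raised to the smaller exponent here converges at rate $n^{-\alpha_\ell/(4(1+\alpha_\ell))}$, and individual $L_2$-consistency of $\hat e,\hat\mu$ follows from the product rate combined with uniform boundedness of $\hat e^{(k)}$ and $\hat\mu^{(k)}$. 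Hence $R_n = o_p(n^{-1/2})$.

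Given the asymptotic-linearity claim, an ordinary CLT applied to the uniformly bounded i.i.d.\ variables $\phi_i^\star$ (bounded because $e\in[\bar e,1-\bar e]$, $\mu\in[0,1]$, $|\eta_\ell|\leq3$, $|g_\ell^{(j)}|\leq1$) gives $\sqrt n(\widehat\ahe-\ahe)\Rightarrow\mathcal N(0,\sigma^2)$ with $\sigma^2=\operatorname{Var}(\phi^\star)$. For the coverage claim I then show $n\hat{\operatorname{se}}^2\to_p\sigma^2$: expanding the sample variance of $\phi_i^{(k(i))}$ around that of $\phi_i^\star$, the cross terms are controlled by Cauchy--Schwarz and $\max_k\|\phi^{(k)}-\phi^\star\|_2=o_p(1)$ (already established above), while the sample variance at truth converges by LLN. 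Slutsky then delivers the stated $\beta$-coverage. The most delicate step is the $R_n$ bound, since it requires extracting individual $L_2$-consistency of $\hat e$ and $\hat\mu$ from only the product-rate hypothesis combined with the boundedness constraints; everything else is mechanical once \Cref{lemma:orthogonal} is in hand.
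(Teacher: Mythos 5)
Your proposal follows essentially the same route as the paper: the same split into a conditional-bias term and a centered empirical-process remainder, fold by fold; the first inequality of \cref{lemma:orthogonal} (with the tilde nuisances set to the truth, so $\kappa_\ell=1$) for the bias; conditional Chebyshev plus the second inequality of \cref{lemma:orthogonal} for the remainder; and CLT plus consistency of $\hat{\operatorname{se}}$ plus Slutsky for coverage. The exponent bookkeeping for the $\eta_\ell$ terms matches the paper exactly, and your Cauchy--Schwarz treatment of the variance estimator is an acceptable variant of the paper's $x^2-y^2=(x+y)(x-y)$ device.

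The one substantive problem is the claim you yourself flag as the delicate step: that ``individual $L_2$-consistency of $\hat e,\hat\mu$ follows from the product rate combined with uniform boundedness.'' This inference is false. The hypothesis $\|\hat e^{(k)}-e\|_2\,\|\hat\mu^{(k)}-\mu\|_2=o_p(n^{-1/2})$ together with $\hat e^{(k)}\in[\bar e,1-\bar e]$, $\hat\mu^{(k)}\in[0,1]$ is consistent with, say, $\|\hat\mu^{(k)}-\mu\|_2=o_p(n^{-1/2})$ while $\|\hat e^{(k)}-e\|_2$ stays bounded away from zero; boundedness gives only $O_p(1)$, not $o_p(1)$, for each factor. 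Without individual consistency, the second inequality of \cref{lemma:orthogonal} yields only $\|\phi^{(k)}-\phi^\star\|_2=O_p(1)$, so your $R_n$ is $O_p(n^{-1/2})$ rather than $o_p(n^{-1/2})$, and the asymptotic-linearity and variance-consistency conclusions do not follow. To be fair, the paper's own proof asserts $\|\phi(\cdot\,;\hat e^{(k)},\dots)-\phi(\cdot\,;e,\dots)\|_2=o_p(1)$ ``by our nuisance-estimation assumptions'' without further justification, so it elides the same point; the honest fix is to additionally assume (as is standard and harmless in practice) that $\|\hat e^{(k)}-e\|_2=o_p(1)$ and $\|\hat\mu^{(k)}-\mu\|_2=o_p(1)$ individually --- these are needed only for the empirical-process remainder and for $n\hat{\operatorname{se}}^2\to_p\sigma^2$, not for the bias term --- rather than to claim they are implied by the product rate.
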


The conditions of \cref{thm:ral} allow that we learn $e,\mu,\eta_1,\ldots,\eta_m$ rather slowly.
In particular, unlike na\"ively plugging in a $\mu$-estimate into \cref{eq:ahe} and taking sample averages, we do not depend at all on \emph{how} $\mu$ is estimated and the uncertainty therein, provided some slow nonparametric rates hold.
For example, learning $e,\mu$ in 
with $o_p(n^{-1/4})$-rates in $L_2$-error suffices. Or, if $e$ is known, then consistently estimating $\mu$ \emph{without} a rate suffices. For $\eta_\ell$, if an $\infty$-margin holds, it suffices to learn $\eta_\ell$ consistently (no rate) in $L_\infty$-error or at $o_p(n^{-1/4})$-rates in $L_2$-error. If a $1$-margin holds, it suffices to learn at $o_p(n^{-1/4})$-rates in $L_\infty$-error. 
While conditions like metric entropy imply $L_2$-error rates \citep{wainwright2019high}, for many classes like smooth functions the $L_p$ error rates are even the same for all $p\in[1,\infty]$ \citep{stoneglobal}.

In fact, under one more condition, our estimator achieves the semiparametric efficiency lower bound.
\begin{theorem}\label{thm:eif}
Suppose $\Pobs\prns*{\eta_\ell(X)=0,g_\ell\s{a}(X)\neq0}=0$ for $\ell=1,\dots,m$, $a=0,1$.
The semiparametric efficiency lower bound for 
$\ahe_{g_0,\dots,g_m}^\rho$
is
$\operatorname{Var}(\phi_{g_0,\dots,g_m}^\rho(X,A,Y;e, \mu,\eta_{1},\ldots,\eta_{m}))$,
whether or not $e(X)$ is unknown (varies in the model) or known (fixed in the model).
\end{theorem}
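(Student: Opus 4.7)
The plan is to identify $\phi := \phi^\rho_{g_0,\ldots,g_m}(X,A,Y;e,\mu,\eta_1,\ldots,\eta_m) - \ahe^\rho_{g_0,\ldots,g_m}$ as the efficient influence function (EIF) in both the unknown-$e$ and known-$e$ models, and invoke the standard fact that the semiparametric efficiency bound equals the variance of the EIF (see, e.g., \citealp{bickel1993efficient}). From \cref{eq:ahephi} we have $\Eobs[\phi] = 0$, and the second bound of \cref{lemma:orthogonal} (applied with $\check\cdot = \tilde\cdot = $ truth) gives $\phi \in L_2(\Pobs)$. It then remains to (i) show pathwise differentiability of $\Ptr \mapsto \ahe^\rho_{g_0,\ldots,g_m}(\Ptr)$ with canonical gradient $\phi$ in the nonparametric model, and (ii) show that $\phi$ already lies in the strictly smaller tangent space associated with known $e$, so that the projection which defines the restricted EIF leaves $\phi$ unchanged.

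For (i), I would take any regular parametric submodel $\{\Pobs_t\}$ through $\Pobs$ at $t=0$ with score $S \in L_2^0(\Pobs)$, and write $e_t, \mu_t, \eta_{\ell,t}$ for the induced nuisances. Decomposing
$$\ahe^\rho_{g_0,\ldots,g_m}(\Pobs_t) - \ahe^\rho_{g_0,\ldots,g_m}(\Pobs) = (\Eobs_{t} - \Eobs)[\phi^\rho(\cdot; e, \mu, \eta)] + \Eobs[\phi^\rho(\cdot; e_t, \mu_t, \eta_t) - \phi^\rho(\cdot; e, \mu, \eta)] + R_t,$$
standard score-function calculus gives the first term's derivative at $t=0$ as $\Eobs[\phi^\rho(\cdot;e,\mu,\eta)\,S] = \Eobs[\phi\, S]$; the cross-term $R_t$ is $o(t)$ by Cauchy--Schwarz combined with the $O(t)$ $L_2$-scaling of nuisance errors under a regular submodel and the second bound of \cref{lemma:orthogonal}; and the middle (mixed-bias) term is $o(t)$ by the first bound of \cref{lemma:orthogonal}, since every error product or super-linear power of an $O(t)$ quantity is $o(t)$. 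Hence $\phi$ is a pathwise gradient. In the fully nonparametric model the tangent space equals $L_2^0(\Pobs)$, so $\phi$ is automatically in it and is therefore \emph{the} EIF, with bound $\operatorname{Var}(\phi^\rho(X,A,Y;e,\mu,\eta))$. The main obstacle is handling the non-smooth indicators $\indic{\eta_{\ell,t}(X) \leq 0}$: one must argue that submodel perturbations do not produce first-order contributions from the boundary $\{\eta_\ell = 0\}$, which is exactly the role of the stated condition $\Pobs(\eta_\ell(X) = 0,\, g_\ell\s{a}(X) \neq 0) = 0$ together with the margin exponents in \cref{lemma:orthogonal}.

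For (ii), the known-$e$ model restricts the tangent space to $T = \{f \in L_2^0(\Pobs) : \Eobs[f \mid X,A] = \Eobs[f \mid X]\}$, obtained by excising the propensity-score tangent $\{h(X,A) : \Eobs[h(X,A) \mid X] = 0\}$. The EIF in the restricted model is the $L_2(\Pobs)$-projection of $\phi$ onto $T$, so it suffices to verify $\phi \in T$, i.e., $\Eobs[\phi \mid X,A] = \Eobs[\phi \mid X]$. Using unconfoundedness and $\Eobs[Y \mid X,A=a] = \mu(X,a)$, a direct case split on $A \in \{0,1\}$ gives
$$\Eobs\!\left[\frac{(A-e(X))\mu(X,0) + (1-A)Y}{1-e(X)} \,\middle|\, X,A\right] = \mu(X,0), \quad \Eobs\!\left[\frac{(e(X)-A)\mu(X,1) + AY}{e(X)} \,\middle|\, X,A\right] = \mu(X,1),$$
and the remaining pieces of $\phi^\rho$ depend only on $X$. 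Hence $\Eobs[\phi \mid X,A]$ is a function of $X$ alone, so $\phi \in T$ and the projection is $\phi$ itself. The EIF and the efficiency bound therefore coincide across the two models, completing the proof.
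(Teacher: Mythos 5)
Your overall strategy --- a von Mises decomposition $\ahe^\rho(\Pobs_t)-\ahe^\rho(\Pobs)=(\Eobs_t-\Eobs)[\phi_0]+\Eobs[\phi_t-\phi_0]+R_t$ with $\phi_t:=\phi^\rho_{g_0,\dots,g_m}(\cdot;e_t,\mu_t,\eta_{1,t},\dots,\eta_{m,t})$, controlled via the orthogonality bounds of \cref{lemma:orthogonal}, plus a tangent-space membership check for the known-$e$ case --- is a legitimate alternative to the paper's proof, which instead computes the pathwise derivative directly by score calculus (splitting $\ahe^\rho$ into $\Psi_0,\dots,\Psi_m$, differentiating each hinge under the integral, and exhibiting the influence function as an explicit element of $\Tcal_x+\Tcal_y$ following Tsiatis and Hahn). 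Your part (ii) is correct and substantively identical to the paper's treatment of the known-$e$ model.

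There is, however, a genuine gap in part (i). The theorem assumes only $\Pobs\prns{\eta_\ell(X)=0,\,g_\ell\s{a}(X)\neq0}=0$; it does \emph{not} assume any margin of positive sharpness, and the only margin guaranteed to hold is the trivial $\alpha_\ell=0$ one. With $\alpha_\ell=0$ the exponent $\frac{p_\ell(\kappa_\ell+\alpha_\ell)}{p_\ell+\alpha_\ell}$ in the first bound of \cref{lemma:orthogonal} equals $1$ (taking $\kappa_\ell=1$), so that lemma only gives $\Eobs[\phi_t-\phi_0]=O(\magd{\eta_{\ell,t}-\eta_\ell}_{p_\ell})=O(t)$, not $o(t)$; likewise the exponent $\frac{p\alpha_\ell}{2p+2\alpha_\ell}$ in the second bound degenerates to $0$, so $\magd{\phi_t-\phi_0}_2=O(1)$ and your Cauchy--Schwarz bound on $R_t$ also yields only $O(t)$. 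The claim that one gets a ``super-linear power of an $O(t)$ quantity'' therefore fails exactly where it is needed. The paper uses the stated hypothesis in a different, qualitative way: wherever $\eta_\ell(x)=0$ the weights $g_\ell\s{0}(x),g_\ell\s{1}(x)$ vanish, so $\eta_{\ell,t}(x)$ is constant in $t$ there; hence $t\mapsto\min\{0,\eta_{\ell,t}(x)\}$ is differentiable at $t=0$ for a.e.\ $x$ with derivative $\indic{\eta_\ell(x)\leq0}\,\dot\eta_{\ell,0}(x)$, and dominated convergence (the hinge is $1$-Lipschitz, so difference quotients are dominated) passes this through the integral with no margin required. Your route can be repaired similarly --- e.g., bound the sign-flip contribution by $\Eobs\bracks{\abs{\eta_{\ell,t}-\eta_\ell}\,\indic{0<\abs{\eta_\ell}\leq\abs{\eta_{\ell,t}-\eta_\ell}}}$ and show it is $o(t)$ by dominated convergence --- but that argument must be supplied; it does not follow from \cref{lemma:orthogonal} under the theorem's hypotheses.
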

\begin{corollary}\label{thm:efficient}
Under the assumptions of \cref{thm:ral,thm:eif}, $\widehat\ahe^\rho_{g_0,\dots,g_m}$ is efficient: it is regular and achieves the minimum asymptotic variance among all regular estimators.
\end{corollary}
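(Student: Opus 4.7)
The plan is to deduce Corollary~\ref{thm:efficient} immediately from Theorems~\ref{thm:ral} and~\ref{thm:eif} together with standard semiparametric theory. From Theorem~\ref{thm:ral}, the estimator admits the asymptotically linear representation
\[
\widehat\ahe^\rho_{g_0,\dots,g_m}
=\ahe^\rho_{g_0,\dots,g_m}
+\frac{1}{n}\sum_{i=1}^n\bigl(\phi_i^\star-\ahe^\rho_{g_0,\dots,g_m}\bigr)
+o_p(n^{-1/2}),
\]
where $\phi_i^\star=\phi^\rho_{g_0,\dots,g_m}(X_i,A_i,Y_i;e,\mu,\eta_1,\dots,\eta_m)$ and $\Eobs[\phi_i^\star]=\ahe^\rho_{g_0,\dots,g_m}$ by~\eqref{eq:ahephi}. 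Hence $\widehat\ahe^\rho_{g_0,\dots,g_m}$ is asymptotically linear with mean-zero, bounded influence function $\psi=\phi^\star-\ahe^\rho_{g_0,\dots,g_m}$, and the CLT gives $\sqrt{n}(\widehat\ahe^\rho_{g_0,\dots,g_m}-\ahe^\rho_{g_0,\dots,g_m})\Rightarrow N(0,\operatorname{Var}(\phi^\star))$.

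Next, by Theorem~\ref{thm:eif}, $\operatorname{Var}(\phi^\star)$ equals the semiparametric efficiency bound. Since this bound is attained precisely by the variance of the efficient influence function, and since the efficient influence function is the unique mean-zero element of $L_2(\Pobs)$ that lies in the closed linear span of scores and whose inner products with scores reproduce the pathwise derivative of $\ahe^\rho_{g_0,\dots,g_m}$, the influence function $\psi$ coincides $\Pobs$-a.s.\ with the efficient influence function. I would then invoke the classical characterization (e.g., van der Vaart, \emph{Asymptotic Statistics}, Theorem~25.21, or Bickel, Klaassen, Ritov, Wellner, 1993, Ch.~3) that an asymptotically linear estimator is regular if and only if its influence function lies in the tangent space; when that influence function also attains the efficiency bound, the estimator is efficient in the sense of having the minimum asymptotic variance among all regular estimators. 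Applying this to $\widehat\ahe^\rho_{g_0,\dots,g_m}$ yields the corollary.

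The only potential obstacle is ensuring that $\psi$ is not merely \emph{variance-minimizing} across influence functions but is itself the efficient influence function, i.e.\ that it lies in the tangent space rather than just matching the bound numerically. This identification is immediate provided the proof of Theorem~\ref{thm:eif} proceeds, as is standard for mixed regular/semiparametric models, by explicitly exhibiting $\phi^\star-\ahe^\rho_{g_0,\dots,g_m}$ as the projection of an initial pathwise derivative representer onto the tangent space. Uniqueness of this projection then forces any variance-minimizing influence function to equal $\psi$ almost surely, and no further computation is needed beyond the two preceding theorems.
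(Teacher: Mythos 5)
Your proposal is correct and matches the paper's own argument: the paper likewise combines the asymptotic linearity from \cref{thm:ral} with the identification of $\phi_{g_0,\dots,g_m}^\rho(X,A,Y;e,\mu,\eta_1,\dots,\eta_m)-\ahe^\rho_{g_0,\dots,g_m}$ as the efficient influence function from \cref{thm:eif}, and invokes the standard result (Lemma~25.23 of van der Vaart) that an asymptotically linear estimator whose influence function is the efficient one is regular and hence efficient. Your caveat about the influence function merely matching the bound numerically is resolved exactly as you anticipate, since the proof of \cref{thm:eif} explicitly exhibits this function as an element of the tangent space reproducing the pathwise derivative.
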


For all of our parameters of interest,
the condition of \cref{thm:eif} holds if $\Pobs\prns*{\cate=0}=\Pobs\prns*{\cats=0}=0$ (\eg, $\cate,\cats$ are continuous).
Similarly, the condition that $\abs*{\hat\eta_\ell\s k(x)}\leq 3\II[\eta_\ell(x)\neq0]$ in \cref{thm:ral} is immediately satisfied if $\hcatef\s k(X),\hcatsf\s k(X)$ are also almost never 0 (\eg, continuous).

\subsection{Double Robustness and Double Validity}

Next we show that \cref{alg:est} is very robust to incorrectly learning some nuisances.

\begin{theorem}\label{thm:drdv}
Fix any $\tilde\mu,\tilde e,\tilde\eta_1,\dots,\tilde\eta_m$ with $\tilde\mu(X)\in[0,1],\,\tilde e(X)\in[\bar e,1-\bar e]$ with $\bar e>0$. Set $\kappa_\ell=1$ if $\tilde \eta_\ell=\eta_\ell$ and otherwise set $\kappa_\ell=0$. Suppose $\tilde\eta_\ell$ satisfies a margin with sharpness $\alpha_\ell$, and for $k=1,\dots,K$ that $\|\hat e\s k-\tilde e\|_2=o_p(1)$, $\|\hat \mu\s k-\tilde\mu\|_2=o_p(1)$, $\|\hat\eta_\ell\s k-\tilde\eta_\ell\|_{p_\ell}=O_p(n^{-\frac{p_\ell+\alpha_\ell}{2p_\ell(\kappa_\ell+\alpha_\ell)}})$ for some $p_\ell\in[1,\infty]$, and $\Pobs\prns*{\bar e\leq\hat e\s k(x)\leq1-\bar e,\,0\leq\hat\mu\s k(x)\leq1,\,\abs*{\hat\eta_\ell\s k(x)}\leq 3\II[\tilde\eta_\ell(x)\neq0],\,\ell\leq m,\,\text{a.e.~$x$}}\to1$. 
If either $\|\hat e\s k- e\|_2=O_p(n^{-1/2})$ or $\|\hat \mu\s k-\mu\|_2=O_p(n^{-1/2})$, then
\begin{align*}
&\widehat\ahe^\rho_{g_0,\dots,g_m}=\ahe^\rho_{g_0,\dots,g_m}+O_p(n^{-\frac12})~~\text{if $\kappa_1=\cdots=\kappa_m=1$,}&\text{(double robustness)}\\
&\widehat\ahe^\rho_{g_0,\dots,g_m}\geq\ahe^\rho_{g_0,\dots,g_m}+O_p(n^{-\frac12})~~\text{if $\rho_\ell=+$ whenever $\kappa_\ell=0$,}&\text{(double validity, upper)}\\
&\widehat\ahe^\rho_{g_0,\dots,g_m}\leq\ahe^\rho_{g_0,\dots,g_m}+O_p(n^{-\frac12})~~\text{if $\rho_\ell=-$ whenever $\kappa_\ell=0$}.&\text{(double validity, lower)}
\end{align*}
\end{theorem}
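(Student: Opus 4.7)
The plan is to decompose $\widehat\ahe^\rho_{g_0,\dots,g_m}-\ahe^\rho_{g_0,\dots,g_m}$ into four pieces and control each separately. Writing $\hat\theta^{(k)}=(\hat e^{(k)},\hat\mu^{(k)},\hat\eta_1^{(k)},\dots,\hat\eta_m^{(k)})$, $\tilde\theta=(\tilde e,\tilde\mu,\tilde\eta_1,\dots,\tilde\eta_m)$, and $P=\Eobs$, I would write
\begin{align*}
\widehat\ahe^\rho_{g_0,\dots,g_m}-\ahe^\rho_{g_0,\dots,g_m}
&=(P_n-P)\phi(\cdot;\tilde\theta)+(P_n-P)\bracks{\phi(\cdot;\hat\theta^{(k)})-\phi(\cdot;\tilde\theta)}\\
&\quad+\bracks{P\phi(\cdot;\hat\theta^{(k)})-P\phi(\cdot;\tilde\theta)}+\bracks{P\phi(\cdot;\tilde\theta)-\ahe^\rho_{g_0,\dots,g_m}}.
\end{align*}
A key preliminary observation is that the rate hypothesis $\|\hat e^{(k)}-e\|_2=O_p(n^{-1/2})$ (or the analogue for $\mu$) combined with the consistency $\|\hat e^{(k)}-\tilde e\|_2=o_p(1)$ and the triangle inequality forces $\tilde e=e$ a.e.\ (respectively, $\tilde\mu=\mu$). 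So at least one of the identities required by \cref{lemma:orthogonal} is in fact satisfied at $\tilde\theta$.

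The first piece is $O_p(n^{-1/2})$ by the CLT, since $\phi(\cdot;\tilde\theta)$ is a bounded fixed function. For the second, cross-fitting makes the fold-$k$ samples independent of $\phi(\cdot;\hat\theta^{(k)})-\phi(\cdot;\tilde\theta)$ after conditioning on the nuisance estimates, so this piece has conditional variance bounded by $\magd{\phi(\cdot;\hat\theta^{(k)})-\phi(\cdot;\tilde\theta)}_2^2/n_k$. The second inequality of \cref{lemma:orthogonal} together with $\|\hat e^{(k)}-\tilde e\|_2=\|\hat\mu^{(k)}-\tilde\mu\|_2=\|\hat\eta_\ell^{(k)}-\tilde\eta_\ell\|_{p_\ell}=o_p(1)$ drives this $L_2$ norm to $o_p(1)$, making the piece $o_p(n^{-1/2})$.

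For the third piece I would apply the first inequality of \cref{lemma:orthogonal} with $\check\theta=\hat\theta^{(k)}$. The product $\|\hat e^{(k)}-e\|_2\|\hat\mu^{(k)}-\mu\|_2$ is $O_p(n^{-1/2})$ since one factor is $O_p(n^{-1/2})$ by hypothesis and the other is bounded by overlap and $[0,1]$-boundedness. For each $\ell$, the assumed rate $\|\hat\eta_\ell^{(k)}-\tilde\eta_\ell\|_{p_\ell}=O_p(n^{-(p_\ell+\alpha_\ell)/(2p_\ell(\kappa_\ell+\alpha_\ell))})$ raised to the exponent $p_\ell(\kappa_\ell+\alpha_\ell)/(p_\ell+\alpha_\ell)$ from the theorem collapses to exactly $O_p(n^{-1/2})$, uniformly for $\kappa_\ell\in\{0,1\}$; so the third piece is $O_p(n^{-1/2})$.

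The fourth piece is where the double-robustness vs.\ double-validity split emerges. Using unconfoundedness, overlap, and the fact that $\tilde e=e$ or $\tilde\mu=\mu$, direct computation of the expectation gives
\[
P\phi(\cdot;\tilde\theta)-\ahe^\rho_{g_0,\dots,g_m}=\sum_{\ell=1}^m\rho_\ell\,\Eobs\bracks{\prns{\II[\tilde\eta_\ell(X)\leq 0]-\II[\eta_\ell(X)\leq 0]}\eta_\ell(X)},
\]
and a four-case sign check on $(\tilde\eta_\ell(X),\eta_\ell(X))$ shows each integrand is always $\geq 0$ and vanishes when $\tilde\eta_\ell=\eta_\ell$ (i.e.\ when $\kappa_\ell=1$). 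So this piece equals $0$ under the DR hypothesis, is $\geq 0$ under DV-upper, and is $\leq 0$ under DV-lower; summing the four pieces yields the three claims. The main bookkeeping obstacle is verifying that the exponent arithmetic in piece three collapses to $n^{-1/2}$ in both the $\kappa_\ell=0$ and $\kappa_\ell=1$ regimes (the two different rate conditions on $\hat\eta_\ell$ are tailored precisely so that this cancellation happens); once that is confirmed, the sign analysis for piece four is elementary.
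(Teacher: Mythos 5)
Your proposal is correct and follows essentially the same route as the paper: the same cross-fitted decomposition used for \cref{thm:ral}, with \cref{lemma:orthogonal} controlling the stochastic and conditional-bias terms, and your fourth piece (the sign analysis of $\rho_\ell\Eobs[(\II[\tilde\eta_\ell\leq0]-\II[\eta_\ell\leq0])\eta_\ell(X)]$) is exactly the content of the paper's \cref{lemma:drdvonavg}. Your explicit observation that the triangle inequality forces $\tilde e=e$ or $\tilde\mu=\mu$ is a detail the paper leaves implicit, and is needed to invoke \cref{lemma:orthogonal}; it is handled correctly here.
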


The first equation (double robustness) in \Cref{thm:drdv} shows our estimator remain consistent for the AHE even if we \emph{incorrectly} learn either $e$ or $\mu$, bot not both, as long as we correctly learn $\eta_1,\dots,\eta_m$.

The second two equations (double validity) in \Cref{thm:drdv} show what happens when we also learn $\eta_\ell$ \emph{incorrectly}. For upper double validity (second equation), we show that, as long as the $\eta_\ell$ that are misestimated correspond to convex terms in the AHE ($\rho_\ell=+$), we remain consistent for an \emph{upper bound} on the AHE, even if we \emph{also} incorrectly learn either $e$ or $\mu$, bot not both. In particular, for $\fna^+$ we only have convex terms, so we are guaranteed and upper bound on the upper bound, that is, we still estimate a valid upper bound on $\fna$ when we misestimate the CATE $\catef$ (which is $\eta_1$ for $\fna^+$) and one of $e$ or $\mu$, it just may not be sharp. We also have a symmetric result for lower double validity (third equation), and applying it to $\fna^-$ we find that we will still estimate a valid upper bound, albeit possibly unsharp, on $\fna$ when we misestimate $\catsf$ and one of $e$ or $\mu$.
The significance is that we can really trust the results of \cref{alg:est} are truly bounds on FNA, even if we make mistakes, and therefore conclusions about harm based on our estimates and inferences can be highly credible.

\section{Empirical Investigation}

We demonstrate our method in a simulation and a case study with data from a real experiment. Replication code is given in the supplement. Experiments were run on an AWS c5.24xlarge instance.

\paragraph{Simulation Study}
We consider $7$-dimensional standard normal $X$ and set $\mu(x,a)=(1+\exp(-\beta(2\II[\xi(x)x_2>0]-1)(\II[\xi(x)x_1\leq0]+(2a-1)\II[\xi(x)x_1>0])))^{-1}$, where $\beta\geq0$ and $\xi(x)=2\operatorname{XOR}(\II[x_3>0],\ldots,\II[x_7>0])-1$. Here $\beta$ controls how much $X$ predicts $Y(a)$. The dashed lines in \cref{fig:sim bounds} show the sharp bounds of \cref{thm:bounds} for $\fna$ and $\fna_{1-\pi^*\to\pi^*}$ as we vary $\beta$. We can see that, as $X$ becomes more predictive, the bounds go from $[0,\,0.5]$ to a point at $0.25$. Note there is no \emph{true} value for $\fna$ as we are not specifying the joint distribution of potential outcomes; instead, per \cref{thm:bounds}, there are joint distributions with $\fna$ equal to any point between the bounds.

Next, we estimate these true bounds. We fix $e(X)=(1+\exp(0.25-\II[x_3>0]+0.5\II[x_4>0])^{-1}$, and draw $n$ samples from $X\sim\mathcal N(0,I)$, $A\mid X\sim \operatorname{Bernoulli}(e(X))$, $\Yobs\mid X,A\sim\operatorname{Bernoulli}(\mu(X,A))$. We apply \cref{alg:est} to this data with $K=5$, estimating $e,\,\mu$ using random regression forests and $\catef,\,\catsf$ using causal forests (all using \emph{R} package \texttt{grf} with default parameters). The solid lines in \cref{fig:sim bounds} show the point estimate and 95\%-CI output by \cref{alg:est} for a single draw of $n=12800$ data for each $\beta$. In \cref{fig:sim rmse,fig:sim cov} we plot the average root mean squared error of point estimates and coverage of CIs over $100$ replications with $\beta=3$ and varying $n$. We compare the performance to the (cross-fitted) ``plugin'' estimator that just plugs in the (same) estimates for $\mu$ into \cref{eq:ahe} and approximate the expectation by a sample average. Shaded intervals denote 95\%-CIs for performance (\ie, due to finite replications).

\paragraph{Case Study}\label{sec:casestudy}
Using data from \citep{behaghel2014private} (BSD license), we compare three different assistance programs offered to French unemployed individuals: the standard benefits (sta), access to public-run counseling (pub), and to private-run counseling (pri).
Our binary outcome is reemployment within six months.
For example, we can consider a hypothetical scenario where we change from a private to a public counseling provider.
As reported by \citep{behaghel2014private}, the ATE for this hypothetically change is slightly positive, so a policymaker might therefore enact it, especially if, for example, it offered cost savings or other operational benefits. We consider who may be harmed by this: how many who remain unemployed under the new public program that would have been reemployed under the old private program.

For any pair of arms, we consider the sharp bounds on the FNA from one to other or the other way and on the misclassification rate of the optimal policy choosing between the two.
Following \citet{kallus2022treatment}, we set $X$ to all pre-treatment covariates in table~2 of \citet{behaghel2014private}.,
and we consider the sharp bounds either with these $X$ or given no $X$ at all.
We fit $\mu$ using linear regression and $\catef,\catsf$ using doubly-robust pseudo-outcome linear regression. Since the propensity is known, in light of the guarantees of \cref{thm:drdv}, we do not worry about misspecifying $\mu$, only using it for variance reduction by accounting for main effects, and we do not worry about misspecifying $\catef,\catsf$, only hoping to tighten the bounds some from using no $X$. The results are shown in \cref{fig:behaghel}. We see that, while covariates are somewhat uninformative, we are still able to tighten bounds compared to no $X$. In particular, for the change from sta to pri or pub or the change from pri to pub, without $X$ the lower bounds on FNA are 0, while with $X$ the lower bounds as well as the lower confidence bounds on these bounds are strictly positive. This means that, with $X$ and the methods developed we can prove that there must be some harm by this change, something we could not do otherwise and such a finding can bolster further work to investigate and address this harm.

\begin{figure}[t!]\centering%
\begin{minipage}{0.28\textwidth}
\centering
\includegraphics[width=\linewidth]{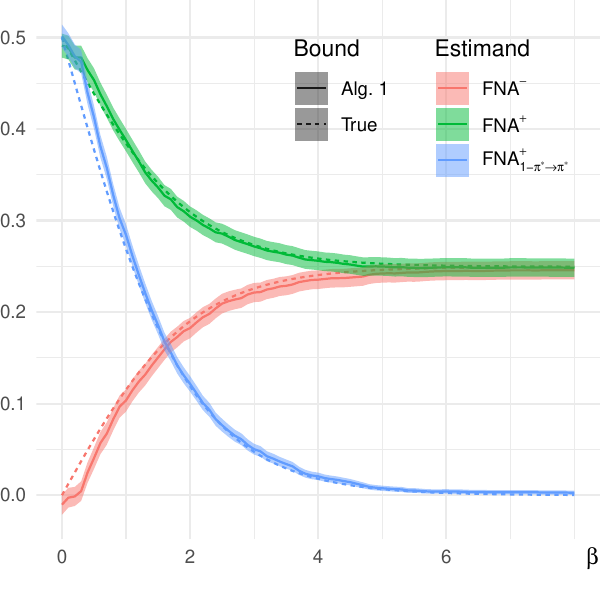}
\caption{Sharp bounds, estimates, and 95\%-CIs as $X$-predictiveness ($\beta$) varies}\label{fig:sim bounds}
\end{minipage}\hfill%
\begin{minipage}{0.28\textwidth}
\centering
\includegraphics[width=\linewidth]{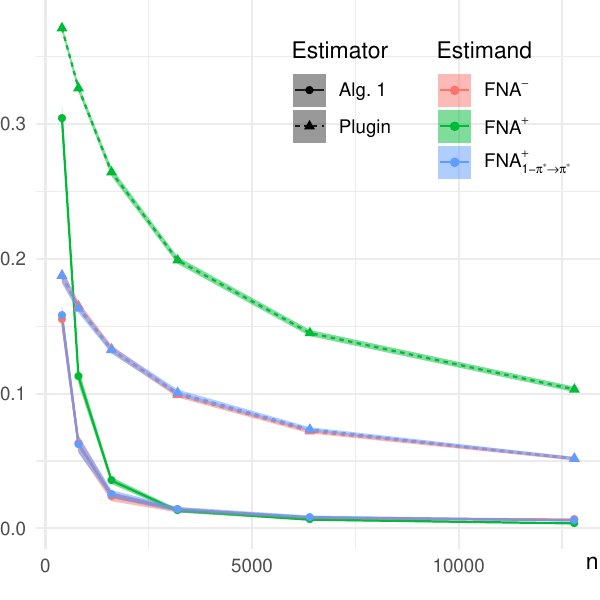}
\caption{Root mean squared error (RMSE) for estimating the sharp bounds as $n$ varies}\label{fig:sim rmse}
\end{minipage}\hfill%
\begin{minipage}{0.28\textwidth}
\centering
\includegraphics[width=\linewidth]{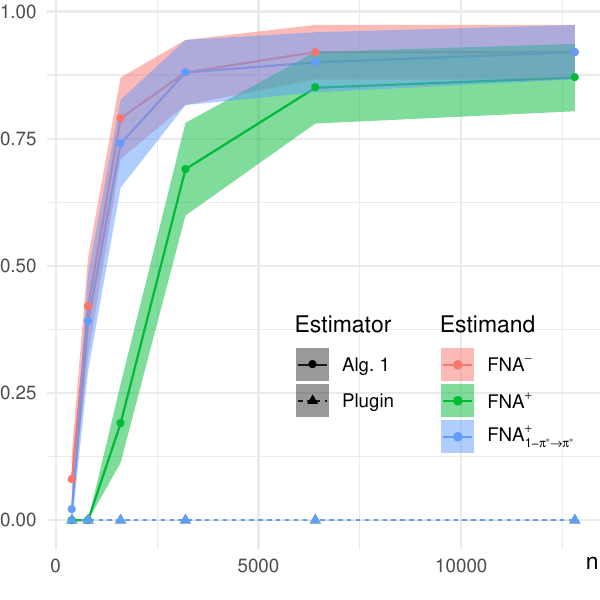}
\caption{Coverage of 95\%-CIs using $\pm1.96/\sqrt{n}$ standard deviations as $n$ varies}\label{fig:sim cov}
\end{minipage}
\\[\floatsep]
\begin{subfigure}[m]{0.35\textwidth}\includegraphics[height=1.6cm]{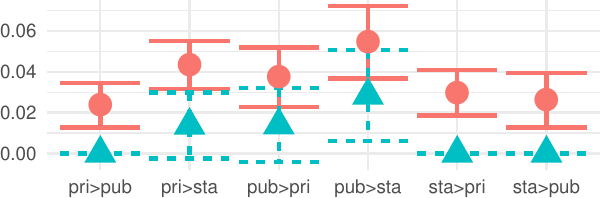}\caption{Lower bound $\fna^-$}\end{subfigure}%
\begin{subfigure}[m]{0.35\textwidth}\includegraphics[height=1.6cm]{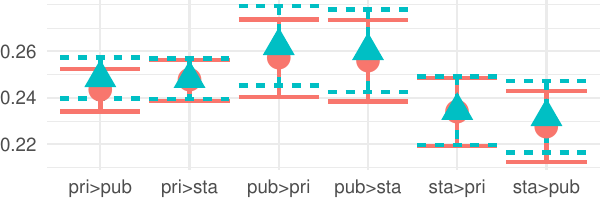}\caption{Upper bound $\fna^+$}\end{subfigure}%
\begin{subfigure}[m]{0.2\textwidth}\includegraphics[height=1.6cm]{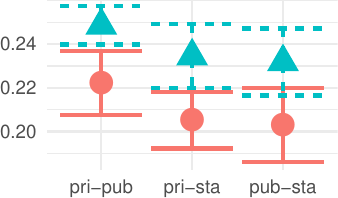}\caption{$\fna^+_{1-\pi^*\to\pi^*}$}\end{subfigure}\hfill%
\begin{subfigure}[m]{0.09\textwidth}\includegraphics[height=1.8cm]{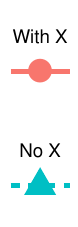}\end{subfigure}%
\caption{Estimated sharp bounds for FNA in the case study with covariates and without covariates}\label{fig:behaghel}%
\end{figure}

\section{Connections, Limitations, Extensions, and Conclusions}\label{sec:discussion}

\paragraph{Connections to partial identification} 
Partial identification of unknowable parameters has a long tradition in econometrics \citep{tamer2010partial,manski1997mixing,manski2003partial,manski1995identification,molinari2020microeconometrics,semenova2017debiased,chernozhukov2007estimation,king2013solution}. Some approaches focus on average treatment effects in the presence of confounding \citep{cross2002regressions,dorn2021doubly,kallus2019interval,rosenbaum,tan2006,yadlowsky2018bounds,bonvini2021sensitivity,chernozhukov2021omitted}, among which \citep{dorn2021doubly,yadlowsky2018bounds,bonvini2021sensitivity,chernozhukov2021omitted} are notable for conducting semiparametric inference on bounds, but for confounding rather than for individual effects, as we do. Some others, like us, focus on understanding the joint distribution of potential outcomes \citep{heckman1997making,ding2019decomposing,abbring2007econometric,manski1997monotone,firpo2008bounds}. Like us, these heavily leverage the Fr\'echet-Hoeffding bounds \citep{ruschendorf1981sharpness,frechet1935generalisation,cambanis1976inequalities,williamson1990probabilistic,ridder2007econometrics}. Unlike these, we use efficient and robust inference that can aggregate across covariates to tighten bounds by leveraging machine learning of conditional-mean outcome functions.

\paragraph{Connections to fairness}
Understanding the distribution of impact is a central problem in algorithmic fairness \citep{mitchell2021algorithmic,barocas-hardt-narayanan}. Here, we specifically focused on the \emph{differential} impact of interventions, which brought up issues of identification. In contrast, in algorithmic fairness one usually measures disparities in observed outcomes, such as in the form of the loss function of a model on a labeled example \citep{kearns2018preventing,bird2020fairlearn,ghosh2021characterizing,agarwal2018reductions}. A line of work specifically focuses on identification issues in algorithmic fairness \citep{kallus2021assessing,chen2019fairness,kallus2019assessing,kallus2018residual,fogliato2020fairness,lakkaraju2017selective,coston2021characterizing,kleinberg2018human}. Notably, \citep{kallus2021assessing} also conduct semiparametric inference on sharp bounds derived from the Fr\'echet-Hoeffding bounds, but in the context of algorithm evaluation with unobserved protected labels rather than interventions with unobserved counterfactuals. And, \citep{kallus2019assessing} do consider unobserved counterfactuals in assessing equality of opportunity \citep{hardt2016equality} for intervention-prioritization policies, but (appropriately in their own context) they assume away our primary focus here by assuming an a priori known bound on FNA (possibly zero, \ie, monotone treatment response).

\paragraph{Limitations and Extensions}
Restricting outcomes to be binary is one limitation of our work, and extensions to continuous settings 
is an interesting avenue of future research. Another limitation is that, while we can bound the FNA, it can still be hard to asses \emph{who} is negatively affected. This is, unfortunately, impossible for the same reason FNA is unidentifiable, but a place to start such analyses may be to consider who are individuals with $\catef(X)\leq 0$ and even characterize that group via summary statistics compared to the population. Indeed, per \cref{eq:bounds01}, the lower bound on FNA is exactly the (negative of the) ATE on this group.
Another important consideration is whether the data is representative: FNA refers to the fraction of the \emph{studied population}, which might differ from the population of interest. \Eg, if an experiment did not enroll a representative sample, we may be systematically excluding some groups from consideration. If such unrepresentativeness is explained by covariates $X$ (\ie, missing at random), the solution is simple: we reweight. If not explained by $X$ (\ie, missing \emph{not} at random), then we need to also account for this additional source of unidentifiability. An avenue for future research is to combine such ambiguity with counterfactual ambiguity. Another concern is whether the outcome represents the impact we want to measure \citep{passi2019problem}.

\paragraph{Conclusions} Our tools support drawing credible conclusions about the potential negative impact of interventions: they both account for ambiguity due to unobserved counterfactuals (while mitigating it using covariates) as well as strongly guard against slow or inconsistent learning of necessary nuisances functions. 
Robust inference on the lower bounds, in particular, crucially provides watertight demonstrations of negative impact, which can bolster efforts to mitigate harm and improve equity.

\section*{Acknowledgments}

I am grateful for the helpful comments 
of the anonymous reviewers and for many insightful and thought-inspiring conversations with my colleagues at Netflix.

\bibliographystyle{plainnat}
\bibliography{lit}

\begin{thebibliography}{63}
\providecommand{\natexlab}[1]{#1}
\providecommand{\url}[1]{\texttt{#1}}
\expandafter\ifx\csname urlstyle\endcsname\relax
  \providecommand{\doi}[1]{doi: #1}\else
  \providecommand{\doi}{doi: \begingroup \urlstyle{rm}\Url}\fi

\bibitem[Abbring and Heckman(2007)]{abbring2007econometric}
Jaap~H Abbring and James~J Heckman.
\newblock Econometric evaluation of social programs, part iii: Distributional
  treatment effects, dynamic treatment effects, dynamic discrete choice, and
  general equilibrium policy evaluation.
\newblock \emph{Handbook of econometrics}, 6:\penalty0 5145--5303, 2007.

\bibitem[Agarwal et~al.(2018)Agarwal, Beygelzimer, Dud{\'\i}k, Langford, and
  Wallach]{agarwal2018reductions}
Alekh Agarwal, Alina Beygelzimer, Miroslav Dud{\'\i}k, John Langford, and Hanna
  Wallach.
\newblock A reductions approach to fair classification.
\newblock In \emph{International Conference on Machine Learning}, pages 60--69.
  PMLR, 2018.

\bibitem[Athey and Imbens(2016)]{causaltree}
Susan Athey and Guido Imbens.
\newblock Recursive partitioning for heterogeneous causal effects.
\newblock \emph{Proceedings of the National Academy of Sciences}, 113\penalty0
  (27):\penalty0 7353--7360, 2016.

\bibitem[Audibert and Tsybakov(2007)]{audiberttsybakov}
Jean-Yves Audibert and Alexandre~B Tsybakov.
\newblock Fast learning rates for plug-in classifiers.
\newblock \emph{The Annals of statistics}, 35\penalty0 (2):\penalty0 608--633,
  2007.

\bibitem[Barocas et~al.(2019)Barocas, Hardt, and
  Narayanan]{barocas-hardt-narayanan}
Solon Barocas, Moritz Hardt, and Arvind Narayanan.
\newblock \emph{Fairness and Machine Learning}.
\newblock fairmlbook.org, 2019.
\newblock \url{http://www.fairmlbook.org}.

\bibitem[Behaghel et~al.(2014)Behaghel, Cr{\'e}pon, and
  Gurgand]{behaghel2014private}
Luc Behaghel, Bruno Cr{\'e}pon, and Marc Gurgand.
\newblock Private and public provision of counseling to job seekers: Evidence
  from a large controlled experiment.
\newblock \emph{American economic journal: applied economics}, 6\penalty0
  (4):\penalty0 142--74, 2014.

\bibitem[Bird et~al.(2020)Bird, Dud{\'\i}k, Edgar, Horn, Lutz, Milan, Sameki,
  Wallach, and Walker]{bird2020fairlearn}
Sarah Bird, Miro Dud{\'\i}k, Richard Edgar, Brandon Horn, Roman Lutz, Vanessa
  Milan, Mehrnoosh Sameki, Hanna Wallach, and Kathleen Walker.
\newblock Fairlearn: A toolkit for assessing and improving fairness in ai.
\newblock \emph{Microsoft, Tech. Rep. MSR-TR-2020-32}, 2020.

\bibitem[Bonvini and Kennedy(2021)]{bonvini2021sensitivity}
Matteo Bonvini and Edward~H Kennedy.
\newblock Sensitivity analysis via the proportion of unmeasured confounding.
\newblock \emph{Journal of the American Statistical Association}, pages 1--11,
  2021.

\bibitem[Cambanis et~al.(1976)Cambanis, Simons, and
  Stout]{cambanis1976inequalities}
Stamatis Cambanis, Gordon Simons, and William Stout.
\newblock Inequalities for e k (x, y) when the marginals are fixed.
\newblock \emph{Zeitschrift f{\"u}r Wahrscheinlichkeitstheorie und verwandte
  Gebiete}, 36\penalty0 (4):\penalty0 285--294, 1976.

\bibitem[Chen et~al.(2019)Chen, Kallus, Mao, Svacha, and
  Udell]{chen2019fairness}
Jiahao Chen, Nathan Kallus, Xiaojie Mao, Geoffry Svacha, and Madeleine Udell.
\newblock Fairness under unawareness: Assessing disparity when protected class
  is unobserved.
\newblock In \emph{Proceedings of the conference on fairness, accountability,
  and transparency}, pages 339--348, 2019.

\bibitem[Chernozhukov et~al.(2007)Chernozhukov, Hong, and
  Tamer]{chernozhukov2007estimation}
Victor Chernozhukov, Han Hong, and Elie Tamer.
\newblock Estimation and confidence regions for parameter sets in econometric
  models 1.
\newblock \emph{Econometrica}, 75\penalty0 (5):\penalty0 1243--1284, 2007.

\bibitem[Chernozhukov et~al.(2018)Chernozhukov, Chetverikov, Demirer, Duflo,
  Hansen, Newey, and Robins]{doubleML}
Victor Chernozhukov, Denis Chetverikov, Mert Demirer, Esther Duflo, Christian
  Hansen, Whitney Newey, and James Robins.
\newblock Double/debiased machine learning for treatment and structural
  parameters.
\newblock \emph{Econometrics Journal}, 21\penalty0 (1):\penalty0 C1--C68, 2018.

\bibitem[Chernozhukov et~al.(2021)Chernozhukov, Cinelli, Newey, Sharma, and
  Syrgkanis]{chernozhukov2021omitted}
Victor Chernozhukov, Carlos Cinelli, Whitney Newey, Amit Sharma, and Vasilis
  Syrgkanis.
\newblock Omitted variable bias in machine learned causal models.
\newblock \emph{arXiv preprint arXiv:2112.13398}, 2021.

\bibitem[Coston et~al.(2021)Coston, Rambachan, and
  Chouldechova]{coston2021characterizing}
Amanda Coston, Ashesh Rambachan, and Alexandra Chouldechova.
\newblock Characterizing fairness over the set of good models under selective
  labels.
\newblock In \emph{International Conference on Machine Learning}, pages
  2144--2155. PMLR, 2021.

\bibitem[Cross and Manski(2002)]{cross2002regressions}
Philip~J Cross and Charles~F Manski.
\newblock Regressions, short and long.
\newblock \emph{Econometrica}, 70\penalty0 (1):\penalty0 357--368, 2002.

\bibitem[Ding et~al.(2019)Ding, Feller, and Miratrix]{ding2019decomposing}
Peng Ding, Avi Feller, and Luke Miratrix.
\newblock Decomposing treatment effect variation.
\newblock \emph{Journal of the American Statistical Association}, 114\penalty0
  (525):\penalty0 304--317, 2019.

\bibitem[Dorn et~al.(2021)Dorn, Guo, and Kallus]{dorn2021doubly}
Jacob Dorn, Kevin Guo, and Nathan Kallus.
\newblock Doubly-valid/doubly-sharp sensitivity analysis for causal inference
  with unmeasured confounding.
\newblock \emph{arXiv preprint arXiv:2112.11449}, 2021.

\bibitem[Firpo and Ridder(2008)]{firpo2008bounds}
Sergio Firpo and Geert Ridder.
\newblock Bounds on functionals of the distribution of treatment effects.
\newblock 2008.

\bibitem[Fogliato et~al.(2020)Fogliato, Chouldechova, and
  G'Sell]{fogliato2020fairness}
Riccardo Fogliato, Alexandra Chouldechova, and Max G'Sell.
\newblock Fairness evaluation in presence of biased noisy labels.
\newblock In \emph{International Conference on Artificial Intelligence and
  Statistics}, pages 2325--2336. PMLR, 2020.

\bibitem[Frank et~al.(1987)Frank, Nelsen, and Schweizer]{frank1987best}
Maurice~J Frank, Roger~B Nelsen, and Berthold Schweizer.
\newblock Best-possible bounds for the distribution of a sum—a problem of
  kolmogorov.
\newblock \emph{Probability theory and related fields}, 74\penalty0
  (2):\penalty0 199--211, 1987.

\bibitem[Fr{\'e}chet(1935)]{frechet1935generalisation}
Maurice Fr{\'e}chet.
\newblock G{\'e}n{\'e}ralisation du th{\'e}oreme des probabilit{\'e}s totales.
\newblock \emph{Fundamenta mathematicae}, 1\penalty0 (25):\penalty0 379--387,
  1935.

\bibitem[Ghosh et~al.(2021)Ghosh, Genuit, and Reagan]{ghosh2021characterizing}
Avijit Ghosh, Lea Genuit, and Mary Reagan.
\newblock Characterizing intersectional group fairness with worst-case
  comparisons.
\newblock In \emph{Artificial Intelligence Diversity, Belonging, Equity, and
  Inclusion}, pages 22--34. PMLR, 2021.

\bibitem[Hahn(1998)]{hahn}
Jinyong Hahn.
\newblock On the role of the propensity score in efficient semiparametric
  estimation of average treatment effects.
\newblock \emph{Econometrica}, pages 315--331, 1998.

\bibitem[Hardt et~al.(2016)Hardt, Price, and Srebro]{hardt2016equality}
Moritz Hardt, Eric Price, and Nati Srebro.
\newblock Equality of opportunity in supervised learning.
\newblock \emph{Advances in neural information processing systems}, 29, 2016.

\bibitem[Heckman et~al.(1997)Heckman, Smith, and Clements]{heckman1997making}
James~J Heckman, Jeffrey Smith, and Nancy Clements.
\newblock Making the most out of programme evaluations and social experiments:
  Accounting for heterogeneity in programme impacts.
\newblock \emph{The Review of Economic Studies}, 64\penalty0 (4):\penalty0
  487--535, 1997.

\bibitem[Imai and Ratkovic(2013)]{slearner}
Kosuke Imai and Marc Ratkovic.
\newblock Estimating treatment effect heterogeneity in randomized program
  evaluation.
\newblock \emph{The Annals of Applied Statistics}, 7\penalty0 (1):\penalty0
  443--470, 2013.

\bibitem[Imbens and Rubin(2015)]{imbens2015causal}
Guido~W Imbens and Donald~B Rubin.
\newblock \emph{Causal inference in statistics, social, and biomedical
  sciences}.
\newblock Cambridge University Press, 2015.

\bibitem[Kallus(2022)]{kallus2022treatment}
Nathan Kallus.
\newblock Treatment effect risk: Bounds and inference.
\newblock \emph{arXiv preprint arXiv:2201.05893}, 2022.

\bibitem[Kallus and Zhou(2018)]{kallus2018residual}
Nathan Kallus and Angela Zhou.
\newblock Residual unfairness in fair machine learning from prejudiced data.
\newblock In \emph{International Conference on Machine Learning}, pages
  2439--2448. PMLR, 2018.

\bibitem[Kallus and Zhou(2019)]{kallus2019assessing}
Nathan Kallus and Angela Zhou.
\newblock Assessing disparate impacts of personalized interventions:
  Identifiability and bounds.
\newblock \emph{arXiv preprint arXiv:1906.01552}, 2019.

\bibitem[Kallus et~al.(2019)Kallus, Mao, and Zhou]{kallus2019interval}
Nathan Kallus, Xiaojie Mao, and Angela Zhou.
\newblock Interval estimation of individual-level causal effects under
  unobserved confounding.
\newblock In \emph{The 22nd international conference on artificial intelligence
  and statistics}, pages 2281--2290. PMLR, 2019.

\bibitem[Kallus et~al.(2021)Kallus, Mao, and Zhou]{kallus2021assessing}
Nathan Kallus, Xiaojie Mao, and Angela Zhou.
\newblock Assessing algorithmic fairness with unobserved protected class using
  data combination.
\newblock \emph{Management Science}, 2021.

\bibitem[Kearns et~al.(2018)Kearns, Neel, Roth, and Wu]{kearns2018preventing}
Michael Kearns, Seth Neel, Aaron Roth, and Zhiwei~Steven Wu.
\newblock Preventing fairness gerrymandering: Auditing and learning for
  subgroup fairness.
\newblock In \emph{International Conference on Machine Learning}, pages
  2564--2572. PMLR, 2018.

\bibitem[Kennedy(2020)]{drlearner}
Edward~H Kennedy.
\newblock Optimal doubly robust estimation of heterogeneous causal effects.
\newblock \emph{arXiv preprint arXiv:2004.14497}, 2020.

\bibitem[King(2013)]{king2013solution}
Gary King.
\newblock A solution to the ecological inference problem.
\newblock In \emph{A Solution to the Ecological Inference Problem}. Princeton
  University Press, 2013.

\bibitem[Kleinberg et~al.(2018)Kleinberg, Lakkaraju, Leskovec, Ludwig, and
  Mullainathan]{kleinberg2018human}
Jon Kleinberg, Himabindu Lakkaraju, Jure Leskovec, Jens Ludwig, and Sendhil
  Mullainathan.
\newblock Human decisions and machine predictions.
\newblock \emph{The quarterly journal of economics}, 133\penalty0 (1):\penalty0
  237--293, 2018.

\bibitem[K{\"u}nzel et~al.(2019)K{\"u}nzel, Sekhon, Bickel, and Yu]{xlearner}
S{\"o}ren~R K{\"u}nzel, Jasjeet~S Sekhon, Peter~J Bickel, and Bin Yu.
\newblock Metalearners for estimating heterogeneous treatment effects using
  machine learning.
\newblock \emph{Proceedings of the national academy of sciences}, 116\penalty0
  (10):\penalty0 4156--4165, 2019.

\bibitem[Lakkaraju et~al.(2017)Lakkaraju, Kleinberg, Leskovec, Ludwig, and
  Mullainathan]{lakkaraju2017selective}
Himabindu Lakkaraju, Jon Kleinberg, Jure Leskovec, Jens Ludwig, and Sendhil
  Mullainathan.
\newblock The selective labels problem: Evaluating algorithmic predictions in
  the presence of unobservables.
\newblock In \emph{Proceedings of the 23rd ACM SIGKDD International Conference
  on Knowledge Discovery and Data Mining}, pages 275--284, 2017.

\bibitem[Makarov(1982)]{makarov1982estimates}
GD~Makarov.
\newblock Estimates for the distribution function of a sum of two random
  variables when the marginal distributions are fixed.
\newblock \emph{Theory of Probability \& its Applications}, 26\penalty0
  (4):\penalty0 803--806, 1982.

\bibitem[Manski(1995)]{manski1995identification}
Charles~F Manski.
\newblock \emph{Identification problems in the social sciences}.
\newblock Harvard University Press, 1995.

\bibitem[Manski(1997{\natexlab{a}})]{manski1997mixing}
Charles~F Manski.
\newblock The mixing problem in programme evaluation.
\newblock \emph{The Review of Economic Studies}, 64\penalty0 (4):\penalty0
  537--553, 1997{\natexlab{a}}.

\bibitem[Manski(1997{\natexlab{b}})]{manski1997monotone}
Charles~F Manski.
\newblock Monotone treatment response.
\newblock \emph{Econometrica}, pages 1311--1334, 1997{\natexlab{b}}.

\bibitem[Manski(2003)]{manski2003partial}
Charles~F Manski.
\newblock \emph{Partial identification of probability distributions}, volume~5.
\newblock Springer, 2003.

\bibitem[Mitchell et~al.(2021)Mitchell, Potash, Barocas, D'Amour, and
  Lum]{mitchell2021algorithmic}
Shira Mitchell, Eric Potash, Solon Barocas, Alexander D'Amour, and Kristian
  Lum.
\newblock Algorithmic fairness: Choices, assumptions, and definitions.
\newblock \emph{Annual Review of Statistics and Its Application}, 8:\penalty0
  141--163, 2021.

\bibitem[Molinari(2020)]{molinari2020microeconometrics}
Francesca Molinari.
\newblock Microeconometrics with partial identification.
\newblock \emph{Handbook of econometrics}, 7:\penalty0 355--486, 2020.

\bibitem[Nie and Wager(2021)]{rlearner}
Xinkun Nie and Stefan Wager.
\newblock Quasi-oracle estimation of heterogeneous treatment effects.
\newblock \emph{Biometrika}, 108\penalty0 (2):\penalty0 299--319, 2021.

\bibitem[Passi and Barocas(2019)]{passi2019problem}
Samir Passi and Solon Barocas.
\newblock Problem formulation and fairness.
\newblock In \emph{Proceedings of the Conference on Fairness, Accountability,
  and Transparency}, pages 39--48, 2019.

\bibitem[Ridder and Moffitt(2007)]{ridder2007econometrics}
Geert Ridder and Robert Moffitt.
\newblock The econometrics of data combination.
\newblock \emph{Handbook of econometrics}, 6:\penalty0 5469--5547, 2007.

\bibitem[Rosenbaum(2002)]{rosenbaum}
Paul~R. Rosenbaum.
\newblock \emph{Observtional Studies}.
\newblock Springer, 2002.

\bibitem[Rubin(1986)]{rubin1986comment}
Donald~B Rubin.
\newblock Comment: Which ifs have causal answers.
\newblock \emph{Journal of the American statistical association}, 81\penalty0
  (396):\penalty0 961--962, 1986.

\bibitem[R{\"u}schendorf(1981)]{ruschendorf1981sharpness}
Ludger R{\"u}schendorf.
\newblock Sharpness of fr{\'e}chet-bounds.
\newblock \emph{Zeitschrift f{\"u}r Wahrscheinlichkeitstheorie und verwandte
  Gebiete}, 57\penalty0 (2):\penalty0 293--302, 1981.

\bibitem[Schick(1986)]{schick1986}
Anton Schick.
\newblock On asymptotically efficient estimation in semiparametric models.
\newblock \emph{Annals of Statistics}, 14\penalty0 (3):\penalty0 1139--1151, 09
  1986.

\bibitem[Semenova(2017)]{semenova2017debiased}
Vira Semenova.
\newblock Debiased machine learning of set-identified linear models.
\newblock \emph{arXiv preprint arXiv:1712.10024}, 2017.

\bibitem[Stone(1982)]{stoneglobal}
Charles~J Stone.
\newblock Optimal global rates of convergence for nonparametric regression.
\newblock \emph{The annals of statistics}, pages 1040--1053, 1982.

\bibitem[Tamer(2010)]{tamer2010partial}
Elie Tamer.
\newblock Partial identification in econometrics.
\newblock \emph{Annu. Rev. Econ.}, 2\penalty0 (1):\penalty0 167--195, 2010.

\bibitem[Tan(2006)]{tan2006}
Zhiqiang Tan.
\newblock A distributional approach for causal inference using propensity
  scores.
\newblock \emph{Journal of the American Statistical Association}, 101\penalty0
  (476):\penalty0 1619--1637, 2006.

\bibitem[Tsiatis(2006)]{tsiatis}
Anastasios~A Tsiatis.
\newblock Semiparametric theory and missing data.
\newblock 2006.

\bibitem[Van~der Vaart(2000)]{van2000asymptotic}
Aad~W Van~der Vaart.
\newblock \emph{Asymptotic statistics}.
\newblock Cambridge University Press, 2000.

\bibitem[Wager and Athey(2018)]{causalforest}
Stefan Wager and Susan Athey.
\newblock Estimation and inference of heterogeneous treatment effects using
  random forests.
\newblock \emph{Journal of the American Statistical Association}, 113\penalty0
  (523):\penalty0 1228--1242, 2018.

\bibitem[Wainwright(2019)]{wainwright2019high}
Martin~J Wainwright.
\newblock \emph{High-dimensional statistics: A non-asymptotic viewpoint},
  volume~48.
\newblock Cambridge University Press, 2019.

\bibitem[Williamson and Downs(1990)]{williamson1990probabilistic}
Robert~C Williamson and Tom Downs.
\newblock Probabilistic arithmetic. i. numerical methods for calculating
  convolutions and dependency bounds.
\newblock \emph{International journal of approximate reasoning}, 4\penalty0
  (2):\penalty0 89--158, 1990.

\bibitem[Yadlowsky et~al.(2018)Yadlowsky, Namkoong, Basu, Duchi, and
  Tian]{yadlowsky2018bounds}
Steve Yadlowsky, Hongseok Namkoong, Sanjay Basu, John Duchi, and Lu~Tian.
\newblock Bounds on the conditional and average treatment effect with
  unobserved confounding factors.
\newblock \emph{arXiv preprint arXiv:1808.09521}, 2018.

\bibitem[Zheng and van~der Laan(2011)]{zheng2011cross}
Wenjing Zheng and Mark~J van~der Laan.
\newblock Cross-validated targeted minimum-loss-based estimation.
\newblock In \emph{Targeted Learning}, pages 459--474. Springer, 2011.

\end{thebibliography}

\section*{Checklist}

\begin{enumerate}

\item For all authors...
\begin{enumerate}
  \item Do the main claims made in the abstract and introduction accurately reflect the paper's contributions and scope?
    \answerYes{Provided sharp bounds and ways to do inference thereon}
  \item Did you describe the limitations of your work?
    \answerYes{See \cref{sec:discussion}}
  \item Did you discuss any potential negative societal impacts of your work?
    \answerYes{See \cref{sec:discussion}}
  \item Have you read the ethics review guidelines and ensured that your paper conforms to them?
    \answerYes{}
\end{enumerate}

\item If you are including theoretical results...
\begin{enumerate}
  \item Did you state the full set of assumptions of all theoretical results?
    \answerYes{}
        \item Did you include complete proofs of all theoretical results?
    \answerYes{}
\end{enumerate}

\item If you ran experiments...
\begin{enumerate}
  \item Did you include the code, data, and instructions needed to reproduce the main experimental results (either in the supplemental material or as a URL)?
    \answerYes{}
  \item Did you specify all the training details (e.g., data splits, hyperparameters, how they were chosen)?
    \answerYes{}
        \item Did you report error bars (e.g., with respect to the random seed after running experiments multiple times)?
    \answerYes{}
        \item Did you include the total amount of compute and the type of resources used (e.g., type of GPUs, internal cluster, or cloud provider)?
    \answerYes{}
\end{enumerate}

\item If you are using existing assets (e.g., code, data, models) or curating/releasing new assets...
\begin{enumerate}
  \item If your work uses existing assets, did you cite the creators?
    \answerYes{}
  \item Did you mention the license of the assets?
    \answerYes{}
  \item Did you include any new assets either in the supplemental material or as a URL?
    \answerYes{}
  \item Did you discuss whether and how consent was obtained from people whose data you're using/curating?
    \answerNA{}
  \item Did you discuss whether the data you are using/curating contains personally identifiable information or offensive content?
    \answerNA{}
\end{enumerate}

\item If you used crowdsourcing or conducted research with human subjects...
\begin{enumerate}
  \item Did you include the full text of instructions given to participants and screenshots, if applicable?
    \answerNA{}
  \item Did you describe any potential participant risks, with links to Institutional Review Board (IRB) approvals, if applicable?
    \answerNA{}
  \item Did you include the estimated hourly wage paid to participants and the total amount spent on participant compensation?
    \answerNA{}
\end{enumerate}

\end{enumerate}

\clearpage

\appendix

\begin{center}\LARGE\bf
Supplementary Materials
\end{center}

\section{Proofs for \cref{sec:sharp}}

\subsection{Proof of \cref{thm:id}}
\begin{proof}[Proof of \cref{thm:id}]
It is easiest to prove this as a consequence of \cref{thm:bounds}, since we already prove the latter below.
By \cref{thm:bounds},
the statement that
$\fna_{\pi_0\to\pi_1}$ is identifiable is equivalent to the statement that $\fna^+_{\pi_0\to\pi_1}-\fna^-_{\pi_0\to\pi_1}=0$, as defined in \cref{eq:boundslo,eq:boundshi}.
Note, moreover, that $\fna^+_{\pi_0\to\pi_1}-\fna^-_{\pi_0\to\pi_1}=\Eobs[\nu(X)]$, where
\begin{align*}
\nu(X)&=
\pi_1(X)(1-\pi_0(X))(\min\{\mu(X,0),1-\mu(X,1)\}+\min\{\cate,0\})
\\&\phantom{=}+\pi_0(X)(1-\pi_1(X))(\min\{\mu(X,1),1-\mu(X,0)\}+\min\{-\cate,0\})\\
&=
(\pi_0(X)+\pi_1(X)-2\pi_0(X)\pi_1(X))\min\{\mu(X,1),1-\mu(X,1),\mu(X,0),1-\mu(X,0)\}.
\end{align*}
and that $\nu(X)\geq0$ is a nonnegative variable. Therefore, the statement that
$\fna_{\pi_0\to\pi_1}$ is identifiable is equivalent to the statement that $\Pobs\prns{\nu(X)=0}=1$.
From the above simplification of $\nu(X)$ and since $\mu(X,A)\in[0,1]$, it is immediate that the event $\nu(X)=0$ is equivalent to the $X$-measurable event $(\pi_1(X)=\pi_0(X))\vee(\mu(X,0)\in\{0,1\})\vee(\mu(X,1)\in\{0,1\})$. Noting that $\operatorname{Var}(Y\mid X,A=a)=0$ is equivalent to $\mu(X,a)\in\{0,1\}$ completes the proof.
\end{proof}

\subsection{Proof of \cref{thm:bounds}}
\begin{proof}
By iterated expectations
we can write
\begin{align*}\fna_{\pi_0\to\pi_1}&=\Eobs[\kappa(X)],\\
\text{where}\quad\kappa(X)&=\Ptr\prns{\Ypot(\pi_0(X))=1,\,\Ypot(\pi_1(X))=0\mid X}\\&=\pi_1(X)(1-\pi_0(X))\Ptr\prns{Y(0)=1,Y(1)=0\mid X}\\&\phantom{=}+\pi_0(X)(1-\pi_1(X))\Ptr\prns{Y(0)=0,Y(1)=1\mid X}.
\end{align*}

Let use first show that $\fna^-_{\pi_0\to\pi_1}\leq\inf(\Scal(\fna_{\pi_0\to\pi_1};\Pobs))$. Consider any feasible $\Ptr$.
By union bound, and since probabilities are in $[0,1]$, we have
\begin{align*}\Ptr\prns{Y(0)=1,Y(1)=0\mid X}&=1-\Ptr\prns{Y(0)=0\vee Y(1)=1\mid X}\\&\geq1-\max\{1,\Ptr\prns{Y(0)=0\mid X}+\Ptr\prns{Y(1)=1\mid X}\}
\\&=\min\{0,\mu(X,0)-\mu(X,1)\}.\end{align*}
Similarly,
$$
\Ptr\prns{Y(0)=0,Y(1)=1\mid X}\geq \min\{0,\mu(X,1)-\mu(X,0)\}.
$$
Therefore,
\begin{align*}
\kappa(X)&\geq \min\{0,
\pi_1(X)(1-\pi_0(X))(\mu(X,0)-\mu(X,1))+
\pi_0(X)(1-\pi_1(X))(\mu(X,1)-\mu(X,0))\}
\\&=
(\pi_0(X)-\pi_1(X))\cate,
\end{align*}
whence $\Eobs[\kappa(X)]\geq\Eobs[(\pi_0(X)-\pi_1(X))\cate]=\fna^-_{\pi_0\to\pi_1}$, as desired.

We next show that $\fna^-_{\pi_0\to\pi_1}\in\Scal(\fna_{\pi_0\to\pi_1};\Pobs)$ by exhibiting a $\Ptr$ that recovers it and is compatible with $\Pobs$. First, we let $\Ptr$ have the same $X$-distribution as $\Pobs$. Next, for each $X$, if $\pi_1(X)=1$, we set
\begin{align*}
\Ptr\prns{Y(0)=1,Y(1)=0\mid X}&=\min\{0,\mu(X,0)-\mu(X,1)\},\\
\Ptr\prns{Y(0)=1,Y(1)=1\mid X}&=\max\{\mu(X,0),\mu(X,1)\},\\
\Ptr\prns{Y(0)=0,Y(1)=1\mid X}&=\min\{0,\mu(X,1)-\mu(X,0)\},\\
\Ptr\prns{Y(0)=0,Y(1)=0\mid X}&=1-\min\{\mu(X,1),\mu(X,0)\},
\end{align*}
and if $\pi_1(X)=0$, we set
\begin{align*}
\Ptr\prns{Y(0)=0,Y(1)=1\mid X}&=\min\{0,\mu(X,1)-\mu(X,0)\},\\
\Ptr\prns{Y(0)=0,Y(1)=0\mid X}&=\max\{\mu(X,0),\mu(X,1)\},\\
\Ptr\prns{Y(0)=1,Y(1)=0\mid X}&=\min\{0,\mu(X,0)-\mu(X,1)\},\\
\Ptr\prns{Y(0)=1,Y(1)=1\mid X}&=1-\min\{\mu(X,1),\mu(X,0)\}.
\end{align*}
Note that in each case, the 4 numbers are nonnegative and always sum to 1, and are therefore form a valid distribution on $\{0,1\}^2$.
Moreover, in each case, we have that $\Ptr\prns{Y(1)=1\mid X}=\mu(X,1)$ and $\Ptr\prns{Y(0)\mid X}=\mu(X,0)$. 
Finally, we set $\Ptr\prns{A=1\mid X,Y(0),Y(1)}=\Pobs\prns{A=1\mid X}$, which ensures that we satisfy unconfoundedness and that $\mu(X,A)=\Eobs\bracks{Y\mid X,A}$.
Therefore, since the $(X,A)$-distribution as well as all $(Y\mid X,A)$-distributions match, we must have that $\Ptr$ is compatible with $\Pobs$. Finally, we note that, under this distribution, we exactly have
$$
\kappa(X)=\pi_1(X)(1-\pi_0(X))\min\{0,\mu(X,0)-\mu(X,1)\}+\pi_0(X)\min\{0,\mu(X,1)-\mu(X,0)\}.$$
Therefore, $\fna^-_{\pi_0\to\pi_1}=\Eobs[\kappa(X)]\in\Scal(\fna_{\pi_0\to\pi_1};\Pobs)$.

Next, we show that $\fna^+_{\pi_0\to\pi_1}\geq\sup(\Scal(\fna_{\pi_0\to\pi_1};\Pobs))$. Consider any feasible $\Ptr$.
Note that
\begin{align*}\Ptr\prns{Y(0)=1,Y(1)=0\mid X}&\leq \min\{\Ptr\prns{Y(0)=1\mid X},\,\Ptr\prns{Y(1)=0\mid X}\}\\&=\min\{\mu(X,0),\,1-\mu(X,1)\}.\end{align*}
Similarly,
$$\Ptr\prns{Y(0)=0,Y(1)=1\mid X}\leq\min\{\mu(X,1),\,1-\mu(X,0)\}.$$
Therefore,
\begin{align*}\kappa(X)\leq \min\{&
\pi_1(X)(1-\pi_0(X))\mu(X,0)+\pi_0(X)(1-\pi_1(X))(1-\mu(X,0))
,
\\&
\pi_1(X)(1-\pi_0(X))(1-\mu(X,1))+\pi_0(X)(1-\pi_1(X))\mu(X,1)
\},\end{align*}
the expectation of which is defined to be $\fna^+_{\pi_0\to\pi_1}$. Therefore, $\Eobs[\kappa(X)]\leq\fna^+_{\pi_0\to\pi_1}$, as desired.

We next show that $\fna^+_{\pi_0\to\pi_1}\in\Scal(\fna_{\pi_0\to\pi_1};\Pobs)$ by exhibiting a $\Ptr$ that recovers it and is compatible with $\Pobs$. First, we let $\Ptr$ have the same $(X,A)$-distribution as $\Pobs$. Next, for each $X$, if $\pi_1(X)=1$, we set
\begin{align*}
\Ptr\prns{Y(0)=1,Y(1)=0\mid X}&=\min\{\mu(X,0),\,1-\mu(X,1)\},\\
\Ptr\prns{Y(0)=1,Y(1)=1\mid X}&=\max\{0,\,\mu(X,0)+\mu(X,1)-1\},\\
\Ptr\prns{Y(0)=0,Y(1)=1\mid X}&=\min\{\mu(X,1),\,1-\mu(X,0)\},\\
\Ptr\prns{Y(0)=0,Y(1)=0\mid X}&=\max\{0,\,1-\mu(X,0)-\mu(X,1)\},
\end{align*}
and if $\pi_1(X)=0$, we set
\begin{align*}
\Ptr\prns{Y(0)=0,Y(1)=0\mid X}&=\min\{\mu(X,1),\,1-\mu(X,0)\},\\
\Ptr\prns{Y(0)=0,Y(1)=0\mid X}&=\max\{0,\,\mu(X,0)+\mu(X,1)-1\},\\
\Ptr\prns{Y(0)=1,Y(1)=0\mid X}&=\min\{\mu(X,0),\,1-\mu(X,1)\},\\
\Ptr\prns{Y(0)=1,Y(1)=1\mid X}&=\max\{0,\,1-\mu(X,0)-\mu(X,1)\},
\end{align*}
Note that in each case, the 4 numbers are nonnegative and always sum to 1, and are therefore form a valid distribution on $\{0,1\}^2$.
Moreover, in each case, we have that $\Ptr\prns{Y(1)=1\mid X}=\mu(X,1)$ and $\Ptr\prns{Y(0)\mid X}=\mu(X,0)$. 
Finally, we set $\Ptr\prns{A=1\mid X,Y(0),Y(1)}=\Pobs\prns{A=1\mid X}$, which ensures that we satisfy unconfoundedness and that $\mu(X,A)=\Eobs\bracks{Y\mid X,A}$.
Therefore, since the $(X,A)$-distribution as well as all $(Y\mid X,A)$-distributions match, we must have that $\Ptr$ is compatible with $\Pobs$. Finally, we note that, under this distribution, we exactly have
\begin{align*}\kappa(X)= \min\{&
\pi_1(X)(1-\pi_0(X))\mu(X,0)+\pi_0(X)(1-\pi_1(X))(1-\mu(X,0))
,
\\&
\pi_1(X)(1-\pi_0(X))(1-\mu(X,1))+\pi_0(X)(1-\pi_1(X))\mu(X,1)
\}.\end{align*}
Therefore, $\fna^+_{\pi_0\to\pi_1}=\Eobs[\kappa(X)]\in\Scal(\fna_{\pi_0\to\pi_1};\Pobs)$.

To complete the proof, note that $\fna_{\pi_0\to\pi_1}$ is linear in $\Ptr$ and that $\{\Ptr:\Ptr\circ \Ccal^{-1}=\Pobs\}$ is a convex set, so that $\Scal(\fna_{\pi_0\to\pi_1};\Pobs)$ is a convex set.
\end{proof}

\subsection{Proof of \cref{thm:general}}

We first present the following restatement of theorems 1 and 2 of \cite{makarov1982estimates}.
\begin{lemma}\label{lemma:frank}
Let $\PP(U),\PP(V)$ denote two given distributions on scalar variables. Then,
\begin{align*}
&\sup_{\Ptr(U,V):\Ptr(U)=\Pobs(U),\,\Ptr(V)=\Pobs(V)}\Ptr(U-V<\delta)=1+\inf_y\prns{\Pobs(U< y+\delta)-\Pobs(V\leq y)},\\
&\inf_{\Ptr(U,V):\Ptr(U)=\Pobs(U),\,\Ptr(V)=\Pobs(V)}\Ptr(U-V<\delta)=\sup_y\prns{\Pobs(U< y+\delta)-\Pobs(V\leq y)}.
\end{align*}
\end{lemma}
\begin{proof}
We start with the restatement of theorems 1 and 2 of \cite{makarov1982estimates} given by the right- and left-hand sides of theorem 3.1 of \cite{frank1987best}, respectively:
\begin{align*}
&\inf_{\Ptr(U,V):\Ptr(U)=\Pobs(U),\,\Ptr(V)=\Pobs(V)}\Ptr(U-V<\delta)=
\sup_y\prns{\Pobs(U<y+\delta)-\Pobs(-V<-y)-1}\wedge0,\\
&\sup_{\Ptr(U,V):\Ptr(U)=\Pobs(U),\,\Ptr(V)=\Pobs(V)}\Ptr(U-V<\delta)=
\inf_y\prns{\Pobs(U<y+\delta)-\Pobs(-V<-y)}\wedge1.
\end{align*}
Then, substituting $\Pobs(-V<-y)=1-\Pobs(V\leq y)$ and using the fact that $\lim_{y\to-\infty}\prns{\Pobs(U< y+\delta)-\Pobs(V\leq y)}=0$, we obtain the statements above.
\end{proof}

We now turn to proving \cref{thm:general}.
\begin{proof}
Set \begin{align*}&\Mcal(\Pobs)=\{\Ptr(X,A,\Ypot(0),\Ypot(1)):\Ptr\circ\Ccal^{-1}=\Pobs,~\Ptr(A=1\mid X)=\Ptr(A=1\mid X,\Ypot(a)),\,a\in\{0,1\}\},\\&\Mcal_{\Ypot(1),\Ypot(0)\mid X}(\Pobs)=\{\Ptr(\Ypot(1),\Ypot(0)):\Ptr(\Ypot(a)\leq y)=\Pobs(\Yobs\leq y,A=a),\,y\in\Rl,a\in\{0,1\}\}.\end{align*} Note that
$$
\Mcal(\Pobs)=\{\Pobs(X)\times\Pobs(A)\times\Ptr(\Ypot(0),\Ypot(1)\mid X):\Ptr(\Ypot(0),\Ypot(1)\mid X)\in\Mcal_{\Ypot(1),\Ypot(0)\mid X}(\Pobs)\}.
$$
First, write
\begin{align*}
\sup(\Scal(\psi_{\zeta,\delta};\Pobs))&=\sup_{\Ptr\in\Mcal(\Pobs)}\Eobs\Ptr(\zeta(X)\ite<\delta\mid X)\\&=\Eobs\sup_{\Ptr\in\Mcal_{\Ypot(1),\Ypot(0)\mid X}(\Pobs)}\Ptr(\zeta(X)\Ypot(1)-\zeta(X)\Ypot(0)<\delta\mid X),
\end{align*}
and similarly for $\inf$.
We now consider the inside of the expectation for every $X$ as the sum of two variables $U+V$, where $U=\zeta(X)\Ypot(1)$ and $V=-\zeta(X)\Ypot(0)$, conditioned on $X$. Then the result follows by \cref{lemma:frank}.
\end{proof}

\subsection{Proof of \cref{lemma:cvar}}

\begin{proof}
Because $\ite\in\{-1,0,1\}$, we have 
\begin{align*}\ts\operatorname{CVaR}_{\alpha}(\ite)=\sup_\beta\bigl(\beta&+\alpha^{-1}\Ptr\prns{\ite=-1}\min\{-1-\beta,\,0\}\\&+\alpha^{-1}\Ptr\prns{\ite=0}\min\{-\beta,\,0\}\\&+\alpha^{-1}\Ptr\prns{\ite=1}\min\{1-\beta,\,0\}\bigr).\end{align*}
Since $\alpha\in(0,1)$, the objective approaches $-\infty$ as $\beta\to\infty$ or $\beta\to-\infty$. Thus, there are only three possible solutions that realize the supremum: $\beta\in\{-1,0,1\}$. Plugging these in above, we obtain
\begin{align*}\ts\operatorname{CVaR}_{\alpha}(\ite)=\max\{-1,\,-\alpha^{-1}\Ptr\prns{\ite=-1},\,1-2\alpha^{-1}\Ptr\prns{\ite=-1}-\alpha^{-1}\Ptr\prns{\ite=0}\}.\end{align*}
First, we note that $\Ptr\prns{\ite=-1}=\fna_{0\to1}$. Second, we note that
\begin{align*}
\Ptr\prns{\ite=0}&=
\Ptr\prns{\Yc=\Yt=0}+\Ptr\prns{\Yc=\Yt=1}
\\&=
(\Ptr\prns{\Yt=0}-\Ptr\prns{\Yc=1,\Yt=0})\\&\phantom{=}+
(\Ptr\prns{\Yc=1}-\Ptr\prns{\Yc=1,\Yt=0})
\\&=
(1-\Etr\bracks{\Yt}-\fna_{0\to1})+(\Etr\bracks{\Yc}-\fna_{0\to1})
\\&=
1-\ate-2\fna_{0\to1}
.
\end{align*}
Substituting yields the result.
\end{proof}

\section{Proofs for \cref{sec:robustness}}

\subsection{Preliminaries}

\begin{lemma}\label{lemma:tsyb}
Let $f,g:\Xcal\to\RR$ be given. Suppose $f$ satisfies a margin with sharpness $\alpha$. Fix $p\geq1$. Then, for some $c>0$,
\begin{align}
\label{eq:tsyb1}\Eobs[(\indic{g(X)\leq0}-\indic{f(X)\leq0})f(X)]&\leq 
c\magd{f-g}_p^{\frac{p(1+\alpha)}{p+\alpha}},
\\\label{eq:tsyb2}\Eobs[(\indic{g(X)\leq0}-\indic{f(X)\leq0})f(X)]&\leq 
c\magd{f-g}_\infty^{1+\alpha},
\\\label{eq:tsyb3}\Prb{\indic{g(X)\leq0}\neq\indic{f(X)\leq0},\,f(X)\neq0}&\leq 
c\magd{f-g}_p^{\frac{p\alpha}{p+\alpha}},
\\\label{eq:tsyb4}\Prb{\indic{g(X)\leq0}\neq\indic{f(X)\leq0},\,f(X)\neq0}&\leq 
c\magd{f-g}_\infty^{\alpha}.
\end{align}
\end{lemma}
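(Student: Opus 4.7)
The plan is to reduce all four inequalities to a single Tsybakov-margin peeling argument, after first extracting a clean pointwise identity. By casework on the signs of $f(X)$ and $g(X)$, one sees
$$(\indic{g(X)\leq 0}-\indic{f(X)\leq 0})\,f(X)=\abs{f(X)}\,\indic{E(X)},\quad E(X):=\{\indic{g(X)\leq 0}\neq \indic{f(X)\leq 0}\}.$$
Moreover, on $E(X)$ the values $f(X)$ and $g(X)$ lie on opposite sides of $0$, so $\abs{f(X)}\leq\abs{f(X)-g(X)}$, and hence $E(X)\cap\{f(X)\neq 0\}\subseteq\{0<\abs{f(X)}\leq\abs{f(X)-g(X)}\}$. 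These two observations drive everything that follows.

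For \eqref{eq:tsyb1}, at any threshold $t>0$ I split
\begin{align*}
\Eobs[\abs{f}\,\indic{E,\,f\neq 0}]
&\leq \Eobs[\abs{f}\,\indic{0<\abs{f}\leq t}]+\Eobs[\abs{f-g}\,\indic{\abs{f-g}>t}]\\
&\leq t\,(t/t_0)^{\alpha}+\magd{f-g}_p^{p}/t^{p-1},
\end{align*}
using $\abs{f}\leq t$ together with the $\alpha$-margin on the first summand, and using $\abs{f}\leq\abs{f-g}$ on $E$ together with the H\"older/Markov bound $\Eobs[\abs{f-g}\indic{\abs{f-g}>t}]\leq \magd{f-g}_p^{p}/t^{p-1}$ on the second. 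Balancing by $t\asymp\magd{f-g}_p^{p/(p+\alpha)}$ yields \eqref{eq:tsyb1}. For \eqref{eq:tsyb2} the peeling collapses at $p=\infty$: on $E\cap\{f\neq 0\}$ we have $0<\abs{f}\leq\magd{f-g}_\infty$ deterministically, so one application of the margin gives $\Eobs[\abs{f}\indic{E,f\neq 0}]\leq \magd{f-g}_\infty(\magd{f-g}_\infty/t_0)^\alpha$. Running the same peeling with $1$ in place of $\abs{f}$ yields
$$\Prb{E,\,f\neq 0}\leq\Prb{0<\abs{f}\leq t}+\Prb{\abs{f-g}>t}\leq(t/t_0)^{\alpha}+\magd{f-g}_p^{p}/t^{p},$$
optimized at $t\asymp\magd{f-g}_p^{p/(p+\alpha)}$ to produce the stronger bound $c\magd{f-g}_p^{p\alpha/(p+\alpha)}$, from which \eqref{eq:tsyb3} follows using $\magd{f-g}_p\leq \magd{f-g}_\infty$; and \eqref{eq:tsyb4} is the $p=\infty$ specialization via the same deterministic containment as in \eqref{eq:tsyb2}.

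The main bookkeeping obstacle is the degenerate regime $\alpha=\infty$ (and $p=\infty$), where the balancing threshold degenerates. There the $\alpha$-margin forces $\abs{f}\in\{0\}\cup[t_0,\infty)$ almost surely, so $E\cap\{f\neq 0\}\subseteq\{\abs{f-g}\geq t_0\}$ and the peeling collapses to Markov-type bounds of the form $\magd{f-g}_p^{p}/t_0^{\,\cdot}$; one then checks these match the values at $\alpha=\infty$ of the exponents $p(1+\alpha)/(p+\alpha)$ and $p\alpha/(p+\alpha)$ under the convention $(\infty a+b)/(\infty c+d)=a/c$. Outside these boundary cases the constant $c$ in each inequality is read off from the elementary calculus optimization and depends only on $t_0,\alpha,p$.
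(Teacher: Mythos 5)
Your proof is correct and rests on the same margin-peeling technique the paper uses: the paper proves \cref{eq:tsyb3} by exactly your threshold split and cites lemmas 5.1--5.2 of Audibert--Tsybakov for \cref{eq:tsyb1,eq:tsyb2,eq:tsyb4}, whereas you derive all four from scratch via the pointwise identity $(\indic{g\leq0}-\indic{f\leq0})f=\abs{f}\indic{E}$ and the containment $E\cap\{f\neq0\}\subseteq\{0<\abs{f}\leq\abs{f-g}\}$. Your unified writeup is a self-contained version of the same argument (and correctly identifies that the balancing threshold is $t\asymp\magd{f-g}_p^{p/(p+\alpha)}$, where the paper's proof of \cref{eq:tsyb3} misprints it as $\magd{f-g}_p^{p+\alpha}$), including the degenerate $\alpha=\infty$ and $p=\infty$ cases.
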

\begin{proof}
\Cref{eq:tsyb2,eq:tsyb4} are essentially a restatement of lemma 5.1 of \citep{audiberttsybakov}. Their statement focuses on conditional probabilities minus $0.5$, but the proof remains identical for real-valued functions. \Cref{eq:tsyb1} is essentially a similar restatement of lemma 5.2 of \citep{audiberttsybakov}.

We conclude by proving \cref{eq:tsyb3}: for any $t>0$,
\begin{align*}
&\Prb{\indic{g(X)\leq0}\neq\indic{f(X)\leq0},\,f(X)\neq0}
\\&\quad\leq \Prb{0<\abs{f(X)}\leq t}+\Prb{\indic{g(X)\leq0}\neq\indic{f(X)\leq0},\,\abs{f(X)}>t}
\\&\quad\leq (t/t_0)^\alpha+\Prb{\abs{f(X)-g(X)}>t}
\\&\quad\leq (t/t_0)^\alpha+\magd{f-g}_p^pt^{-p}.
\end{align*}
Setting $t=\magd{f-g}_p^{\frac{p}{p+\alpha}}$ yields the result.
\end{proof}

\begin{lemma}\label{lemma:drdvonavg}
Fix any $\tilde\mu,\tilde e,\tilde\eta_1,\dots,\tilde\eta_m$ with either $\tilde\mu=\mu$ or $\tilde e=e$.
Set $\kappa_\ell=1$ if $\tilde \eta_\ell=\eta_\ell$ and otherwise set $\kappa_\ell=0$. Then,
\begin{align*}
&\Eobs[\phi_{g_0,\dots,g_m}^\rho(X,A,Y;\tilde e,\tilde\mu,\tilde\eta_{1},\ldots,\tilde\eta_{m})]=\ahe^\rho_{g_0,\dots,g_m}~~\text{if $\kappa_1=\cdots=\kappa_m=1$,}\\
&\Eobs[\phi_{g_0,\dots,g_m}^\rho(X,A,Y;\tilde e,\tilde\mu,\tilde\eta_{1},\ldots,\tilde\eta_{m})]\geq\ahe^\rho_{g_0,\dots,g_m}~~\text{if $\rho_\ell=+$ whenever $\kappa_\ell=0$,}\\
&\Eobs[\phi_{g_0,\dots,g_m}^\rho(X,A,Y;\tilde e,\tilde\mu,\tilde\eta_{1},\ldots,\tilde\eta_{m})]\leq\ahe^\rho_{g_0,\dots,g_m}~~\text{if $\rho_\ell=-$ whenever $\kappa_\ell=0$}.
\end{align*}
\end{lemma}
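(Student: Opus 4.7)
The plan is to reduce the claim to the classical AIPW doubly-robust identity at the level of conditional expectations, followed by a short pointwise sign analysis of a hinge-indicator difference. First, I would isolate the two AIPW-type fractions appearing in $\phi_{g_0,\dots,g_m}^\rho$, call them $\psi\s{0}(X,A,Y;\tilde e,\tilde\mu)$ and $\psi\s{1}(X,A,Y;\tilde e,\tilde\mu)$, and verify the standard fact that $\Eobs[\psi\s{a}(X,A,Y;\tilde e,\tilde\mu)\mid X]=\mu(X,a)$ whenever either $\tilde e=e$ or $\tilde\mu=\mu$. This is a one-line calculation after rewriting each fraction in the equivalent AIPW form $\tilde\mu(X,a)+\II[A=a](Y-\tilde\mu(X,a))/\text{(plugged-in propensity)}$ and invoking unconfoundedness on the augmentation term.

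Next, I would pull the $X$-measurable weights $g_\ell\s{a}(X)$ and $\II[\tilde\eta_\ell(X)\leq 0]$ out through the conditional expectation and apply this identity to $\Eobs[\phi_{g_0,\dots,g_m}^\rho(X,A,Y;\tilde e,\tilde\mu,\tilde\eta_1,\ldots,\tilde\eta_m)]$. After substituting $\eta_\ell(X)=g_\ell\s{0}(X)\mu(X,0)+g_\ell\s{1}(X)\mu(X,1)+g_\ell\s{2}(X)$ from \cref{eq:eta}, the result takes the clean form
\begin{align*}
\Eobs[\phi_{g_0,\dots,g_m}^\rho]=\Eobs\bigl[\eta_0^\star(X)+\textstyle\sum_{\ell=1}^m\rho_\ell\,\II[\tilde\eta_\ell(X)\leq 0]\,\eta_\ell(X)\bigr],
\end{align*}
where $\eta_0^\star(X):=g_0\s{0}(X)\mu(X,0)+g_0\s{1}(X)\mu(X,1)+g_0\s{2}(X)$ is the analogous expression for the non-hinge $g_0$-term. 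Using the identity $\min\{0,y\}=\II[y\leq 0]\,y$, the AHE definition in \cref{eq:ahe} likewise rewrites as $\ahe^\rho_{g_0,\dots,g_m}=\Eobs[\eta_0^\star(X)+\sum_\ell\rho_\ell\II[\eta_\ell(X)\leq 0]\eta_\ell(X)]$, so the difference reduces to
\begin{align*}
\Eobs[\phi_{g_0,\dots,g_m}^\rho]-\ahe^\rho_{g_0,\dots,g_m}=\Eobs\bigl[\textstyle\sum_{\ell=1}^m\rho_\ell\,\bigl(\II[\tilde\eta_\ell(X)\leq 0]-\II[\eta_\ell(X)\leq 0]\bigr)\eta_\ell(X)\bigr].
\end{align*}

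The final step is a four-case pointwise check on the joint signs of $\tilde\eta_\ell(X)$ and $\eta_\ell(X)$, which shows that $(\II[\tilde\eta_\ell\leq 0]-\II[\eta_\ell\leq 0])\eta_\ell\geq 0$ everywhere: the two disagreement regions contribute $(+1)\cdot(\text{positive})$ and $(-1)\cdot(\text{non-positive})$, both non-negative, while the two agreement regions contribute zero; and this quantity is identically zero when $\kappa_\ell=1$ (i.e., $\tilde\eta_\ell=\eta_\ell$). The three assertions of the lemma then follow immediately: if all $\kappa_\ell=1$, every summand vanishes and we get equality; if $\rho_\ell=+1$ whenever $\kappa_\ell=0$, every surviving summand is non-negative, yielding the upper-validity inequality; and symmetrically for $\rho_\ell=-1$. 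The only substantive ingredient is the AIPW identity for $\psi\s{a}$, which is classical, so I do not anticipate any real obstacle.
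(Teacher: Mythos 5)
Your proposal is correct and follows essentially the same route as the paper's proof: reduce to the conditional-expectation AIPW identity (valid when either $\tilde e=e$ or $\tilde\mu=\mu$), rewrite both $\Eobs[\phi]$ and the AHE via $\min\{0,y\}=\II[y\leq0]\,y$, and observe that each residual term $(\II[\tilde\eta_\ell\leq0]-\II[\eta_\ell\leq0])\eta_\ell(X)$ is pointwise nonnegative, so a common sign $\rho_\ell$ on the mislearned indices fixes the direction of the bias. No gaps.
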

\begin{proof}
Because either $\tilde\mu=\mu$ or $\tilde e=e$, we have that
\begin{align*}
\Eobs[\phi_{g_0,\dots,g_m}^\rho(X,A,Y;\tilde e,\tilde\mu,\tilde\eta_{1},\ldots,\tilde\eta_{m})]=
\Eobs\biggl[&g_0\s{0}(X)\mu(X,0)+g_0\s{1}(X)\mu(X,1)+g_0\s{2}(X)\\&+\sum_{\ell=1}^m\rho_\ell\II[\tilde\eta_\ell(X)\leq0]\eta_\ell(X)\biggr].
\end{align*}
If $\kappa_1=\cdots=\kappa_m=1$, then the first equation in the statement is immediate.

If $\kappa_\ell=0$, note that 
$$
(\II[\tilde\eta_\ell(X)\leq0]-\II[\eta_\ell(X)\leq0])\eta_\ell(X)=
\II[(\tilde\eta_\ell(X)\leq0)~\operatorname{XOR}~(\eta_\ell(X)\leq0)]\abs{\eta_\ell(X)}\geq0.
$$
Therefore, if, among all $\ell$ with $\kappa_\ell=0$, the sign $\rho_\ell$ is the same, then the biases in $\rho_\ell\E[\II[\tilde\eta_\ell(X)\leq0]\eta_\ell(X)]$ all go the same way, establishing the latter two inequalities in the statement.
\end{proof}

\subsection{Proof of \cref{lemma:inftymargin,lemma:1margin}}
\begin{proof}[Proof of \cref{lemma:inftymargin}]
If $t>t_0$ then clearly $\Prb{0<\abs{f(X)}\leq t}\leq1$.
If $t\leq t_0$ then $\Prb{0<\abs{f(X)}\leq t}=\Prb{f(X)\neq0}-\Prb{\abs{f(X)}>t}\leq 0$.
Finally note that, $\indic{t>t_0}\leq (t/t_0)^\infty$.
\end{proof}

\begin{proof}[Proof of \cref{lemma:1margin}]
Let $M$ be the bound on the derivative of the CDF on $(-\epsilon,0)\cup(0,\epsilon)$.
If $t\geq\epsilon$ then clearly $\Prb{0<\abs{f(X)}\leq t}\leq 1$.
If $t<\epsilon$ then $\Prb{0<\abs{f(X)}\leq t}\leq 2Mt$.
Thus, $\Prb{0<\abs{f(X)}\leq t}\leq t/\min\{(2M)^{-1},\epsilon\}$.
\end{proof}

\subsection{Proof of \cref{lemma:orthogonal}}

\begin{proof}
We first tackle the first inequality to be proven. We will proceed by bounding each of
\begin{align}
&\label{eq:orthb1}\abs{
\Eobs[\phi_{g_0,\dots,g_m}^\rho(X,A,Y;\check e, \check\mu,\check\eta_{1},\ldots,\check\eta_{m})]
-
\Eobs[\phi_{g_0,\dots,g_m}^\rho(X,A,Y;\tilde e, \tilde\mu,\check\eta_{1},\ldots,\check\eta_{m})]
},\\&\label{eq:orthb2}
\abs{
\Eobs[\phi_{g_0,\dots,g_m}^\rho(X,A,Y;\tilde e, \tilde\mu,\check\eta_{1},\ldots,\check\eta_{m})]
-
\Eobs[\phi_{g_0,\dots,g_m}^\rho(X,A,Y;\tilde e, \tilde\mu,\check\eta_{1},\ldots,\check\eta_{m})]
}.
\end{align}

We begin by bounding \cref{eq:orthb1} considering separately the case that $\tilde e=e$ and that $\tilde\mu=\mu$.
For brevity let us set
\begin{align*}
&\check \zeta\s a(X)=g_0\s{a}(X)+\sum_{\ell=1}^m\rho_\ell\indic{\check\eta_\ell(X)\leq0}g_\ell\s{a}(X)\in[-m-1,\,m+1],\quad a=0,1.
\end{align*}

In the case that $\tilde e=e$, we bound \cref{eq:orthb1} by bounding each of
\begin{align}
&\label{eq:orthb1e1}\abs{
\Eobs[\phi_{g_0,\dots,g_m}^\rho(X,A,Y;\check e, \check\mu,\check\eta_{1},\ldots,\check\eta_{m})]
-
\Eobs[\phi_{g_0,\dots,g_m}^\rho(X,A,Y; e, \check\mu,\check\eta_{1},\ldots,\check\eta_{m})]
},\\
&\label{eq:orthb1e2}\abs{
\Eobs[\phi_{g_0,\dots,g_m}^\rho(X,A,Y; e, \check\mu,\check\eta_{1},\ldots,\check\eta_{m})]
-
\Eobs[\phi_{g_0,\dots,g_m}^\rho(X,A,Y; e, \tilde\mu,\check\eta_{1},\ldots,\check\eta_{m})]
}.
\end{align}
Using iterated expectations to first take expectations with respect to $Y$ and then with respect to $A$, we find that \cref{eq:orthb1e1} is equal to
\begin{align}\label{eq:orthb1e1bound}
&\abs{\Eobs\bracks{\check \zeta\s0(X)\frac{\check e(X)-e(X)}{1-\check e(X)}(\mu(X,0)-\check \mu(X,0))+\check \zeta\s1(X)\frac{e(X)-\check e(X)}{\check e(X)}(\mu(X,1)-\check \mu(X,1))}}\\\notag
&\leq \frac{m+1}{\bar e}\magd{e-\check e}_2(\magd{\mu(\cdot,0)-\check \mu(\cdot,0)}_2+\magd{\mu(\cdot,0)-\check \mu(\cdot,0)}_2)
\\\notag&\leq \frac{2(m+1)}{\bar e^{3/2}}\magd{e-\check e}_2\magd{\mu-\check \mu}_2.
\end{align}
Iterating expectations the same way, we find that \cref{eq:orthb1e2} is equal to 0.

In the case that $\tilde \mu=\mu$, we bound \cref{eq:orthb1} by bounding each of
\begin{align}
&\label{eq:orthb1m1}\abs{
\Eobs[\phi_{g_0,\dots,g_m}^\rho(X,A,Y;\check e, \check\mu,\check\eta_{1},\ldots,\check\eta_{m})]
-
\Eobs[\phi_{g_0,\dots,g_m}^\rho(X,A,Y; \check e, \mu,\check\eta_{1},\ldots,\check\eta_{m})]
},\\
&\label{eq:orthb1m2}\abs{
\Eobs[\phi_{g_0,\dots,g_m}^\rho(X,A,Y; e, \check\mu,\check\eta_{1},\ldots,\check\eta_{m})]
-
\Eobs[\phi_{g_0,\dots,g_m}^\rho(X,A,Y; \tilde e, \mu,\check\eta_{1},\ldots,\check\eta_{m})]
}.
\end{align}
Using iterated expectations to first take expectations with respect to $Y$ and then with respect to $A$, we find that \cref{eq:orthb1m1} is again exactly equal to \cref{eq:orthb1e1bound} and the same bound applies. Again, iterating expectations the same way, we find that \cref{eq:orthb1m2} is equal to 0.

We now turn to \cref{eq:orthb2}. 
Using iterated expectations to first take expectations with respect to $Y$ and then with respect to $A$, we find that \cref{eq:orthb2} is equal to
\begin{align*}
&\abs{\Eobs\bracks{\sum_{\ell=1}^m\rho_\ell\prns{\indic{\check\eta_\ell(X)\leq0}-\indic{\tilde\eta_\ell(X)\leq0}}\eta_\ell(X)}}
\\&\qquad\leq
\sum_{\ell=1}^m\Eb{\abs{\prns{\indic{\check\eta_\ell(X)\leq0}-\indic{\tilde\eta_\ell(X)\leq0}}\eta_\ell(X)}}.
\end{align*}
We proceed to bound each summand by applying one of \cref{eq:tsyb1,eq:tsyb2,eq:tsyb3,eq:tsyb4} of \cref{lemma:tsyb}. 
Consider the $\ell\thh$ term. Suppose $\kappa_\ell=1$ (\ie, $\tilde\eta_\ell=\eta_\ell$). Then applying \cref{eq:tsyb1} if $p<\infty$ and \cref{eq:tsyb2} if $p=\infty$ yields the desired bound.
Suppose $\kappa_\ell=0$. Since $\eta_\ell(X)\in[-3,3]$, we can bound the $\ell\thh$ term by $3\Prb{\indic{\check\eta_\ell(X)\leq0}\neq\indic{\tilde\eta_\ell(X)\leq0}}=3\Prb{\indic{\check\eta_\ell(X)\leq0}\neq\indic{\tilde\eta_\ell(X)\leq0},\,\tilde\eta_\ell(X)\neq0}$, where in the last equality we used $\Prb{\tilde\eta_\ell(X)=0,\,\check\eta_\ell(X)\neq0}=0$. Applying \cref{eq:tsyb3} if $p<\infty$ and \cref{eq:tsyb4} if $p=\infty$ yields the desired bound.

We now turn to proving \cref{eq:orthb2}. We proceed by bounding each of the following:
\begin{align}
&\label{eq:orthb2e}
\magd{\phi_{g_0,\dots,g_m}^\rho(X,A,Y;\check e, \check\mu,\check\eta_{1},\ldots,\check\eta_{m})
-
\phi_{g_0,\dots,g_m}^\rho(X,A,Y;\tilde e, \check\mu,\check\eta_{1},\ldots,\check\eta_{m})}_2,
\\
&\label{eq:orthb2m}
\magd{\phi_{g_0,\dots,g_m}^\rho(X,A,Y;\check e, \check\mu,\check\eta_{1},\ldots,\check\eta_{m})
-
\phi_{g_0,\dots,g_m}^\rho(X,A,Y;\tilde e, \tilde\mu,\check\eta_{1},\ldots,\check\eta_{m})}_2,
\\
&\label{eq:orthb2h}
\magd{\phi_{g_0,\dots,g_m}^\rho(X,A,Y;\check e, \check\mu,\check\eta_{1},\ldots,\check\eta_{m})
-
\phi_{g_0,\dots,g_m}^\rho(X,A,Y;\tilde e, \tilde\mu,\tilde\eta_{1},\ldots,\tilde\eta_{m})}_2.
\end{align}

Firstly, \cref{eq:orthb2e} is equal to
\begin{align*}
&\magd{\check\zeta\s0(X)(1-A)\prns{\frac1{1-\check e}-\frac1{1-\tilde e}}(Y-\check\mu(X,0))+\check\zeta\s1(X)A\prns{\frac1{\check e}-\frac1{\tilde e}}(Y-\check\mu(X,1))}_2\\&\qquad\leq \frac{2(m+1)}{\bar e^2}\magd{\check e-\tilde e}_2.
\end{align*}

Secondly, \cref{eq:orthb2m} is equal to
\begin{align*}
&\magd{\check\zeta\s0(X)\frac{A-\tilde e(X)}{1-\tilde e(X)}(\check\mu(X,0)-\tilde\mu(X,0))+\check\zeta\s1(X)\frac{\tilde e(X)-A}{\tilde e(X)}(\check\mu(X,1)-\tilde\mu(X,1))}_2\\
&\qquad\leq \frac{m+1}{\bar e}(\magd{\check\mu(\cdot,0)-\tilde \mu(\cdot,0)}_2+\magd{\check\mu(\cdot,0)-\tilde \mu(\cdot,0)}_2)
\leq \frac{2(m+1)}{\bar e^{3/2}}\magd{\check\mu-\tilde \mu}_2.
\end{align*}

Lastly, \cref{eq:orthb2h} is equal to
\begin{align*}
&\Biggl\|
\sum_{\ell=1}^m\rho_\ell 
\prns{\indic{\check\eta_\ell(X)\leq0}-\indic{\tilde\eta_\ell(X)\leq0}}\biggl(
g_\ell\s0(X)\frac{\prns{A-\check e(X)}\check\mu(X,0)+(1-A)Y}{1-\check e(X)}
\\&\hphantom{\Biggl\|
\sum_{\ell=1}^m\rho_\ell 
\prns{\indic{\check\eta_\ell(X)\leq0}-\indic{\tilde\eta_\ell(X)\leq0}}\biggl(}+
g_\ell\s1(X)\frac{\prns{\check e(X)-A}\check\mu(X,1)+AY}{\check e(X)}
+
g_\ell\s2(X)
\biggr)
\Biggr\|_2\\
&\qquad\leq (4\bar e^{-1}+1)\sum_{\ell=1}^m \Prb{\indic{\check\eta_\ell(X)\leq0}\neq\indic{\tilde\eta_\ell(X)\leq0}}^{1/2}
\\&\qquad= (4\bar e^{-1}+1)\sum_{\ell=1}^m \Prb{\indic{\check\eta_\ell(X)\leq0}\neq\indic{\tilde\eta_\ell(X)\leq0},\,\tilde\eta_\ell(X)\neq 0}^{1/2}.
\end{align*}
where in the last equality we used $\Prb{\tilde\eta_\ell(X)=0,\,\check\eta_\ell(X)\neq0}=0$. Applying \cref{lemma:tsyb}, using \cref{eq:tsyb3} if $p<\infty$ and \cref{eq:tsyb4} if $p=\infty$, yields the desired bound.
\end{proof}

\subsection{Proof of \cref{thm:ral}}

\begin{proof}
For brevity, let $\phi=\phi^\rho_{g_0,\dots,g_m}$.
Define
$\mathcal I_k=\braces{i\equiv k-1~\text{(mod $K$)}}$,
$\mathcal I_{-k}=\braces{i\not\equiv k-1~\text{(mod $K$)}}$,
$\hat\E_{k}f(X,A,Y)=\frac{1}{\abs{\mathcal I_k}}\sum_{i\in\mathcal I_k}f(X_i,A_i,Y_i)$, and $\E_{\mid-k}f(X,A,Y)=\E[f(X,A,Y)\mid \{(X_i,A_i,Y_i):i\in \mathcal I_{-k}\}]$.
We then have
\begin{align}
\notag&\hat\E_{k}\phi(X,A,\Yobs;\hat e^{(k)},\hat\mu^{(k)},\hat\eta\s k_1,\ldots,\hat\eta\s k_m)-\hat\E_{k}\phi(X,A,\Yobs;e,\mu,\eta\s k_1,\ldots,\eta\s k_m)\\
&=\label{eq:asymp a}\E_{\mid-k}\phi(X,A,\Yobs;\hat e^{(k)},\hat\mu^{(k)},\hat\eta\s k_1,\ldots,\hat\eta\s k_m)-\E_{\mid-k}\phi(X,A,\Yobs;e,\mu,\eta\s k_1,\ldots,\eta\s k_m)\\&\label{eq:asymp b}\phantom{=}+(\hat\E_{k}-\E_{\mid-k})(\phi(X,A,\Yobs;\hat e^{(k)},\hat\mu^{(k)},\hat\eta\s k_1,\ldots,\hat\eta\s k_m)-\phi(X,A,\Yobs;e,\mu,\eta\s k_1,\ldots,\eta\s k_m)).
\end{align}
We proceed to show that each of \cref{eq:asymp a,eq:asymp b} are $o_p(1/\sqrt{n})$.

By \cref{lemma:orthogonal}, we have that \cref{eq:asymp a} is
$$O_p\prns{\magd{e-\hat e^{(k)}}_2\magd{\mu-\hat \mu^{(k)}}_2
+\sum_{\ell=1}^m\magd{\eta_\ell-\hat \eta_\ell\s k}_{p_\ell}^{\frac{p\alpha_\ell}{2p+2\alpha_\ell}}
}.$$
So, by our nuisance-estimation assumptions, \cref{eq:asymp a} is $o_p(1/\sqrt{n})$.

By Chebyshev's inequality conditioned on $\mathcal I_{-k}$, we obtain that \cref{eq:asymp b} is
$$
O_p\prns{\abs{\mathcal I_k}^{-1/2}\magd{\phi(X,A,\Yobs;\hat e^{(k)},\hat\mu^{(k)},\hat\eta\s k_1,\ldots,\hat\eta\s k_m)-\phi(X,A,\Yobs;e,\mu,\eta\s k_1,\ldots,\eta\s k_m)}_2}.
$$
By our nuisance-estimation assumptions and \cref{lemma:orthogonal}, we have that $\magd{\phi(X,A,\Yobs;\hat e^{(k)},\hat\mu^{(k)},\hat\eta\s k_1,\ldots,\hat\eta\s k_m)-\phi(X,A,\Yobs;e,\mu,\eta\s k_1,\ldots,\eta\s k_m)}_2=o_p(1)$.
Thus, \cref{eq:asymp b} is $o_p(1/\sqrt{n})$.

The first equation is concluded by noting $\frac1n\sum_{i=1}^n\phi(X_i,A_i,Y_i;e,\mu,\eta\s k_1,\ldots,\eta\s k_m))=\frac1K\sum_{k=1}^K\frac{\abs{\mathcal I_k}}{n/K}\hat\E_k \phi(X,A,Y;e,\mu,\eta\s k_1,\ldots,\eta\s k_m))$.

For the second equation, first note that the first equation together with the central limit theorem imply
$$
\sqrt{n}\prns{\widehat\ahe^\rho_{g_0,\dots,g_m}-\ahe^\rho_{g_0,\dots,g_m}}\rightsquigarrow\Ncal(0,\,\sigma^2),~\sigma^2=\operatorname{Var}(\phi(X,A,Y;e, \mu,\eta_{1},\ldots,\eta_{m})).
$$
Therefore, the result is concluded if we can show that $\sqrt{n}\hat{\operatorname{se}}\to_p\sigma$.
Note that $(n-1)\hat{\operatorname{se}}^2=\frac1n\sum_{i=1}^n\phi_i^2-(\widehat\ahe^\rho_{g_0,\dots,g_m})^2$ and that $\sigma^2=\E[\phi^2(X,A,Y;e, \mu,\eta_{1},\ldots,\eta_{m})]-(\ahe^\rho_{g_0,\dots,g_m})^2$.
We have already shown that $\widehat\ahe^\rho_{g_0,\dots,g_m}\to_p\ahe^\rho_{g_0,\dots,g_m}$ and continuous mapping implies the same holds for their squares. Next we study the convergence of $\frac1n\sum_{i=1}^n\phi_i^2$.
Using $x^2-y^2=(x+y)(x-y)$, we bound
\begin{align*}
&\abs{\hat\E_{k}\phi^2(X,A,\Yobs;\hat e^{(k)},\hat\mu^{(k)},\hat\eta\s k_1,\ldots,\hat\eta\s k_m)-\hat\E_{k}\phi^2(X,A,\Yobs;e,\mu,\eta\s k_1,\ldots,\eta\s k_m)}
\\&\leq
\frac{5(m+1)}{\bar e}\abs{\hat\E_{k}\phi(X,A,\Yobs;\hat e^{(k)},\hat\mu^{(k)},\hat\eta\s k_1,\ldots,\hat\eta\s k_m)-\hat\E_{k}\phi(X,A,\Yobs;e,\mu,\eta\s k_1,\ldots,\eta\s k_m)}.
\end{align*}
Then, following the very same arguments used to prove the first equation we can show that $\frac1n\sum_{i=1}^n\phi_i^2\to_p\E[\phi^2(X,A,Y;e, \mu,\eta_{1},\ldots,\eta_{m})]$.
\end{proof}

\subsection{Proof of \cref{thm:eif}}

\begin{proof}
Define
\begin{align*}
\Psi_0&=\Eobs\bracks{g_0\s{0}(X)\mu(X,0)+g_0\s{1}(X)\mu(X,1)+g_0\s{2}(X)}\\
\Psi_\ell&=\Eobs\bracks{\min\{0,\,g_\ell\s{0}(X)\mu(X,0)+g_\ell\s{1}(X)\mu(X,1)+g_\ell\s{2}(X)\}},\quad\ell=1,\dots,m.
\end{align*}
Since $\ahe_{g_0,\dots,g_m}^\rho=\Psi_0+\sum_{\ell=1}^m\rho_\ell\Psi_\ell$, if the efficient influence functions of each of $\Psi_0,\ldots,\Psi_m$ exist and are given by $\psi_0,\ldots,\psi_m$, respectively, then the efficient influence function of $\ahe_{g_0,\dots,g_m}^\rho$ is given by $\psi_0+\sum_{\ell=1}^m\rho_\ell\psi_\ell$.

Fix $\ell=1,\dots,m$ and let us derive the efficient influence function of $\Psi_\ell$.
Let $\lambda_x(S)$ be a measure on $\Xcal$ dominating $\Pobs(X\in S)$.
Let $\lambda_a$ be the counting measure on $\{0,1\}$.
Let $\lambda_y$ be the counting measure on $\{0,1\}$.
Let $\lambda$ be the product measure.
Consider the nonparametric model $\Pcal$ consisting of all distributions on $(X,A,Y)$ that are absolutely continuous with respect to $\lambda$.
By theorem 4.5 of \citet{tsiatis}, the tangent space with respect to this model is given by
\begin{align*}
&\Tcal_x+\Tcal_a+\Tcal_y,\\
\text{where}~~
&\Tcal_x=\{f(X):\Eobs f(X)=0,\,\Eobs f^2(X)<\infty\},\\
&\Tcal_a=\{f(X,A):\Eobs[f(X,A)\mid X]=0,\,\Eobs f^2(X,A)<\infty\},\\
&\Tcal_y=\{f(X,A,Y):\Eobs[f(X,A,Y)\mid X,A]=0,\,\Eobs f^2(X,A,Y)<\infty\}.
\end{align*}
Consider any submodel $\PP_t\in\Pcal$ passing through $\PP_0=\Pobs$ with density $f_t(x,a,y)=f_t(x)f_t(a\mid x)f_t(y\mid a,x)$ and score $s_t(x,a,y)=\frac{\partial}{\partial t}\log f_t(x,a,y)=s_t(x)+s_t(a\mid x)+s_y(y\mid a,x)=\frac{\partial}{\partial t}\log f_t(x)+\frac{\partial}{\partial t}\log f_t(a\mid x)+\frac{\partial}{\partial t}\log f_t(y\mid x,a)$ belonging to the tangent space $\Tcal$.
The efficient influence function is the unique function $\psi_\ell(x,a,y)\in\Tcal$, should it exist, such that
$$
\frac{\partial}{\partial t}\Psi_\ell(\PP_t)\Big|_{t=0}=\Eobs[\psi_\ell(x,a,y)s_0(x,a,y)]
$$
for any such submodel.

Define $\dot\mu_t(x,a)=\frac{\partial}{\partial t} \int_y y f_t(y\mid x,a)d\lambda_y(y)$. Then, by product rule, we have
\begin{align}\label{eq:Psidot}\begin{aligned}
\frac{\partial}{\partial t}\Psi_\ell(\PP_t)\Big|_{t=0}=
&~\int_x \indic{\eta_\ell(x)\leq0}\prns{g_\ell\s0(x)\dot\mu_0(x,0)+g_\ell\s1(x)\dot\mu_0(x,1)}f_0(x)d\lambda_x(x)\\&+\int_x\min\{0,\,\eta_\ell(x)\}s_0(x)f_0(x)d\lambda_x(x),
\end{aligned}\end{align}
where we used the fact that $\eta_\ell(x)=0$ implies $g_\ell\s0(x)=g_\ell\s1(x)=0$ for almost every $x$.

Note that
\begin{align*}
&\dot\mu_0(x,1)=\int_a\int_y\frac{a}{e(x)}ys_0(y\mid x,a)f_0(y\mid x,a)f_0(a\mid x)d\lambda_y(y)d\lambda_a(a),\\
&\dot\mu_0(x,0)=\int_a\int_y\frac{1-a}{1-e(x)}ys_0(y\mid x,a)f_0(y\mid x,a)f_0(a\mid x)d\lambda_y(y)d\lambda_a(a).
\end{align*}
Note further that since densities integrate to 1 at all $t$'s, we have that,
\begin{align}
\label{eq:yscoreint}&\int_ys_0(y\mid x,a)f(y\mid x,a)d\lambda_y(y)=0,\\
\label{eq:ascoreint}&\int_as_0(a\mid x)f(a\mid x)d\lambda_a(a)=0,\\
\label{eq:xscoreint}&\int_xs_0(x)f(x)d\lambda_x(x)=0.
\end{align}
Subtracting 0 from the right hand-side of \cref{eq:Psidot} in the form of the left-hand side of \cref{eq:yscoreint} times $\indic{\eta_\ell(x)\leq0}\prns{g_\ell\s0(x)\frac{1-a}{1-e(x)}\mu(x,0)+g_\ell\s1(x)\frac{a}{e(x)}\mu(x,1)}$ plus the left-hand side of \cref{eq:xscoreint} times $\Psi_\ell$, we find that 
\begin{align*}
\frac{\partial}{\partial t}\Psi_\ell(\PP_t)\Big|_{t=0}=
\int\bigl(&
\indic{\eta_\ell(x)\leq0}\prns{g_\ell\s0(x){\frac{1-a}{1-e(x)}(y-\mu(x,0))}
+
g_\ell\s1(x){\frac{a}{e(x)}(y-\mu(x,1))}}
\\&+
\indic{\eta_\ell(x)\leq0}\prns{g_\ell\s0(x)\mu(x,0)
+
g_\ell\s1(x)\mu(x,1)
+
g_\ell\s2(x)}
\\&-\Psi_\ell\bigr)
s_0(x,a,y)
f_0(x,a,y)d\lambda(x,a,y).
\end{align*}
Since
\begin{align*}
\indic{\eta_\ell(x)\leq0}\prns{g_\ell\s0(x){\frac{1-a}{1-e(x)}(y-\mu(x,0))}
+
g_\ell\s1(x){\frac{a}{e(x)}(y-\mu(x,1))}}&\in\Tcal_y,\\
\indic{\eta_\ell(x)\leq0}\prns{g_\ell\s0(x)\mu(x,0)
+
g_\ell\s1(x)\mu(x,1)
+
g_\ell\s2(x)}
-\Psi_\ell&\in\Tcal_x,
\end{align*}
we conclude that their sum
\begin{align*}
\psi_\ell(x,a,y)=\indic{\eta_\ell(x)\leq0}\biggl(&g_\ell\s0(x)\prns{\mu(x,0)+\frac{1-a}{1-e(x)}(y-\mu(x,0))}
\\&+
g_\ell\s1(x)\prns{\mu(x,1)+\frac{a}{e(x)}(y-\mu(x,1))}
+g_\ell\s2(x)\biggr)-\Psi_\ell
\end{align*}
is the efficient influence function for $\Psi_\ell$.

Since $\Psi_0$ is just a weighted average of potential outcomes like the ATE, following the same arguments as in theorem 1 of \citet{hahn} shows that 
\begin{align*}
\psi_0(x,a,y)=&~g_0\s0(x)\prns{\mu(x,0)+\frac{1-a}{1-e(x)}(y-\mu(x,0))}
\\&+
g_0\s1(x)\prns{\mu(x,1)+\frac{a}{e(x)}(y-\mu(x,1))}
+g_0\s2(x)-\Psi_0
\end{align*}
is the influence function for $\Psi_0$.

The sum of $\psi_0,\psi_1,\ldots,\psi_m$ is exactly $\phi_{g_0,\dots,g_m}^\rho(X,A,Y;e, \mu,\eta_{1},\ldots,\eta_{m})-\ahe_{g_0,\dots,g_m}^\rho$, which completes the proof for the case where $e(X)$ is unknown.

For the model with $e(X)$ fixed and known, the tangent space is given by just $\Tcal_x+\Tcal_y$, that is, the tangent space component corresponding to the $(A\mid X)$-model is the subspace $\{0\}$. Since each $\psi_\ell$ only had components in $\Tcal_x$ and $\Tcal_y$, it still remains within this more restricted tangent space and therefore is still the efficient influence function of $\Psi_\ell$.
\end{proof}

\subsection{Proof of \cref{thm:efficient}}

\begin{proof}
The only statement left to prove is the regularity of the estimator. This follows from Lemma 25.23 of \citet{van2000asymptotic} because \cref{thm:ral} shows the estimator is asymptotically linear with influence function $\phi_{g_0,\dots,g_m}^\rho(X,A,Y;e, \mu,\eta_{1},\ldots,\eta_{m})$, and \cref{thm:eif} shows that this is the efficient influence function.
\end{proof}

\subsection{Proof of \cref{thm:drdv}}

\begin{proof}
The proof is the same as that of \cref{thm:ral} but using \cref{lemma:drdvonavg} to translate the bias in the limit of the estimator, $\Eobs[\phi_{g_0,\dots,g_m}^\rho(X,A,Y;\tilde e,\tilde\mu,\tilde\eta_{1},\ldots,\tilde\eta_{m})]$, relative to the target $\ahe^\rho_{g_0,\dots,g_m}$.
\end{proof}

\end{document}